\documentclass[lettersize,onecolumn]{IEEEtran}
\usepackage{setspace} 
\usepackage{cite}
\usepackage{graphicx}
\usepackage{caption}
\usepackage{subcaption} 
\usepackage[cmex10]{amsmath}
\usepackage{mathtools}
\usepackage{amsthm}
\usepackage{amsfonts}
\usepackage[font=small,skip=0pt]{caption}
\usepackage{amssymb}
\usepackage{textcomp}
\usepackage{xcolor}
\def\BibTeX{{\rm B\kern-.05em{\sc i\kern-.025em b}\kern-.08em
    T\kern-.1667em\lower.7ex\hbox{E}\kern-.125emX}}
\usepackage{algorithm}
\usepackage{algpseudocode}
\usepackage{bm,xstring}
\usepackage{fixltx2e}
\usepackage{tikz}
\usepackage{hyperref}
\usetikzlibrary{matrix}
\usetikzlibrary{shapes,arrows,positioning}
\usetikzlibrary{decorations.pathreplacing}



\DeclareMathOperator*{\argmin}{argmin}
\usepackage{color}

\def\blue{\color{blue}}

\newcommand{\ignore}[1]{}


\newtheorem{example}{Example}
\newtheorem{lemma}{Lemma}

\newtheorem{theorem}{Theorem}

\theoremstyle{definition}

\newtheorem{definition}{Definition}
\newtheorem{assumption}{Assumption}
\newtheorem{remark}{Remark}


\hyphenation{op-tical net-works semi-conduc-tor IEEE-Xplore}

\begin{document}

\title{Contextual Status Updating: How to Maximize Situational Awareness?}




\title{Remote Safety Monitoring: Status Updating for Situational Awareness Maximization}

\title{Remote Safety Monitoring: Significance-Aware Status Updating for Situational Awareness}

\author{Tasmeen~Zaman~Ornee,~\IEEEmembership{Member,~IEEE,}
        Md~Kamran~Chowdhury~Shisher,~ \IEEEmembership{Member,~IEEE,}

        Clement~Kam,~\IEEEmembership{ Member,~IEEE,}
        and~Yin~Sun,~\IEEEmembership{Senior Member,~IEEE}
\thanks{A part of this manuscript was presented in {\it IEEE MILCOM Workshop on QuAVoI}, Boston, MA, 2023 \cite{Ornee_MILCOM}.}
\thanks{T. Z. Ornee is with the Department of Electrical and Computer Engineering, The Ohio State University, Columbus, OH, 43210 USA (email: ornee.1@osu.edu).}
\thanks{M. K. C. Shisher is with the Department of Electrical and Computer Engineering, Purdue University, West Lafayette, IN, 47907 USA (email: mshisher@purdue.edu).}
\thanks{T. Z. Ornee and M. K. C. Shisher were Ph.D. students in the Department of Electrical and Computer Engineering at Auburn University, Auburn, AL, 36830 USA.}
\thanks{C. Kam is with the U.S. Naval Research Laboratory, Washington, DC 20375 USA (e-mail: clement.k.kam.civ@us.navy.mil).}
\thanks{Y. Sun is with the Department
of Electrical and Computer Engineering, Auburn University, Auburn, AL, 36830 USA e-mail:  (yzs0078@auburn.edu).}
\thanks{This work was supported in part by the NSF grant CNS-2239677 and NRL 6.1 Base Program.}
}




\maketitle

\begin{abstract}
In this study, we consider a problem of remote safety monitoring, where a monitor pulls status updates from multiple sensors monitoring several safety-critical situations.
Based on the received updates, multiple estimators determine the current safety-critical situations. Due to transmission errors and limited channel resources, the received status updates may not be fresh, resulting in the possibility of misunderstanding the current safety situation. In particular, if a dangerous situation is misinterpreted as safe, the safety risk is high. We study the joint design of transmission scheduling and estimation for multi-sensor, multi-channel remote safety monitoring, aiming to minimize the loss due to the unawareness of potential danger. We show that the joint design of transmission scheduling and estimation can be reduced to a sequential optimization of estimation and scheduling. The scheduling problem can be formulated as a Restless Multi-armed Bandit (RMAB), for which it is difficult to establish indexability. We propose a low-complexity Maximum Gain First (MGF) policy and prove it is asymptotically optimal as the numbers of sources and channels scale up proportionally, without requiring the indexability condition. We also provide an information-theoretic interpretation of the transmission scheduling problem. Numerical results show that our estimation and scheduling policies achieves higher performance gain over random selection with queue, randomized policy, and Maximum Age First (MAF) policy. 
\end{abstract}

\begin{IEEEkeywords}
Safety, {A}ge of {I}nformation, Situational awareness, Estimation.
\end{IEEEkeywords}

\section{Introduction}
\IEEEPARstart{A}{wareness} of dangerous situations is of paramount importance  in safety-critical systems \cite{grau2017industrial}. For example, in health monitoring systems, precise tracking of the
glucose level or heart rate is crucial to facilitate prompt precautionary measures \cite{abdulmalek2022iot}. In disaster response systems, it is important to monitor landslides (e.g., downfalls of a large mass of ground, rock fragments and debris) in unstable areas where intense rainfalls, floods or earthquakes occur \cite{adeel2019survey}. Landslides in such areas might cause loss of lives and damage buildings. 
These scenarios highlight the need for safety monitoring systems to assess 
situational awareness about remote systems. Any misunderstanding of the situational awareness can lead to severe consequences.


One challenge to efficiently utilize the state information in real-time is the limited capacity of the communication medium. Conversely, depending on the surrounding situation, some processes can be more significant to monitor than others
For example, in autonomous driving, a self-driving car deciding to change lanes needs to prioritize real-time information about vehicles in the adjacent lane over those in its current lane. 
To address this issue, it is important to quantify the significance, or the value, of status updates, for safety monitoring.

In this paper, we answer this question for a pull-based status updating system where multiple sensors monitor the status of several safety-critical situations. A central scheduler requests updates from sensors. Due to transmission errors and limited channel resources, the received updates may not be fresh. One performance metric that characterizes data freshness is the \emph{Age of Information (AoI)} \cite{KaulYatesGruteser-Infocom2012}. Let $U(t)$ be the generation time of the freshest received observation by time $t$. AoI, as a function of $t$, is defined as $\Delta(t) = t- U(t)$ which exhibits a linear growth with time $t$ and drops down to a smaller value whenever a fresher observation is delivered. However, 
AoI only captures the timeliness of the information, but not its significance. Hence, relying solely on AoI-based decision making is not sufficient, particularly, in safety-critical scenarios where misunderstanding about the situation can lead to significant performance loss. To address this, we employ multiple estimators in the receiver that estimate the safety situations, and by incorporating a loss function, we measure the significance of the estimated status update for safety monitoring. One important property of the loss function is: If a dangerous situation is incorrectly estimated as safe, the loss will be significantly higher because of the high risk of damage. Conversely, if the safe situation is incorrectly estimated as dangerous, the loss will be small. This is because even if the estimate is incorrect, the risk of damage is small.

Next, we seek to answer the following research question: {\it how can we jointly design efficient safety situation estimators and a transmission scheduling policy to maximize situational awareness?} The contributions of this paper are summarized as follows:

\begin{itemize}

\item We show that the joint design of estimators and scheduling policy can be reduced to a sequential optimization of estimators and scheduling policy (see Section \ref{joint_design}). Given the estimators, the goal of the multi-agent, multi-channel scheduling problem is to minimize the performance loss due to situational awareness while satisfying a channel resource constraint. The formulated problem is a Restless Multi-armed Bandit (RMAB). 

\item We significantly reduce the state space of the scheduling problem by leveraging the sufficient statistic of the history. While belief states are commonly used as sufficient statistics in the literature \cite{Liu_TIT, Ouyang_TMC, Peter_ITC, Ouyang_Infocom, Javidi_TIT, Ansell_2003, Neely_2010, Murugesan_2012, chen2021scheduling, chen2022index}, they can still result in a high-dimensional state space.  We find that the latest received observation and its age also form a sufficient statistic for estimating each agent's safety situation (see Theorem \ref{thm_sufficient_statistic}). Consequently, we replace the belief MDP framework with a Markov Decision Process (MDP) that uses the latest received observation and its age as states. This alternative formulation is crucial because the belief MDP approach leads to a quadratic increase in the size of the state space with AoI \cite{Chen_2022, chen2022index, Ouyang_Infocom, Ouyang_TMC}, while our method demonstrates only a linear increase (see Section \ref{probelm_simplification}). This results in a substantial computational advantage.


\item Our characterization of the optimal estimator reveals an interesting connection between the loss of situational awareness and the concept of generalized conditional entropy. We prove that the expected loss due to the unawareness of potential danger given the latest received observation and its age is a generalized conditional entropy (see Lemma \ref{lemma_entropy}). This characterization also facilitates the design of an efficient scheduling policy. Using this information-theoretic analysis, we show that frequent updates are necessary when the received observations are near safety boundaries (high uncertainty, low situational awareness). Conversely, when observations are far from the boundaries (low uncertainty, high situational awareness), less frequent updates are sufficient (see Section \ref{discussions}).

\item We develop an asymptotically optimal Maximum Gain First policy (see Algorithm \ref{algo1} and Theorem \ref{optimality}) by solving the multi-sensor, multi-channel transmission scheduling problem which is formulated as an RMAB. We utilize constraint relaxation and the Lagrangian method to decompose the original problem into multiple separated Markov Decision Processes (MDPs) and solve each MDP by dynamic programming \cite{bertsekas2011dynamic}. Most of the prior works \cite{tripathi2019whittle, ornee2023whittle, Shisher_2024, Liu_TIT, chen2021scheduling} in RMAB have utilized Whittle index policy that requires an indexability condition to satisfy. In our problem, the indexability condition is difficult to establish due to (i) complicated state transitions, (ii) unreliable channels, and (iii) general loss functions. The
benefit of our Maximum Gain First policy is that no indexability condition
is required to satisfy. Our results hold for general loss functions and both reliable and unreliable channels.

\item Numerical results illustrate that our scheduling policy achieves significant performance gain compared to random selection with queue, randomized policy, and Maximum Age First policy (see Section \ref{simulation}). Because our policy utilizes the knowledge of the latest received signal, it illustrates good performance compared to the other policies that ignore this knowledge.
 
\end{itemize}
\section{Related Work}

\textbf{AoI in context-aware updating:} 
Several research papers studied information-theoretic measures to evaluate the impact of information freshness along with information content \cite{sun2018information, sun2019sampling, VoI_Kosta, wang2022framework, soleymani2016optimal, Chen_2022, Shisher2022, Shisher_2024, shisher2021age, shisher2023learning}. In \cite{sun2019sampling, VoI_Kosta, wang2022framework, soleymani2016optimal}, the authors employed Shannon's mutual information to quantify the information carried by received data messages regarding the current signal at the source and used Shannon’s conditional entropy to measure the uncertainty about the current signal. Based on the studies of \cite{sun2019sampling, wang2022framework, soleymani2016optimal}, the authors in \cite{Chen_2022} utilized Uncertainty of Information (UoI) by using the Shannon's conditional entropy. 
In \cite{Shisher2022, Shisher_2024, shisher2021age, shisher2023learning}, a generalized conditional entropy associated with a loss function $L$, or $L$-conditional entropy $H_L(Y_t|, \Delta (t), X_{t-\Delta(t)})$ was utilized, where $Y_t$ is the true state of the source and $X_{t-\Delta(t)}$ is the observed value. 

In addition, minimization of linear and non-linear functions of AoI has been extensively studied in literature \cite{sun2018information, sun2019sampling, Sun_TIT_2017, Bedewy_2021, Ornee2021, Wiener_TIT, Klugel_infocom, jsac_survey}. One limitation of AoI is that it only captures the timeliness of the information while neglecting the actual influence of the conveyed information. 
To address this issue, several performance metrics were introduced in conjunction with AoI \cite{zhong2018two, Ali_AoII, zheng2020urgency, yates2021age, holm2021freshness, Chen_2022, Uysal_QAoI_ISIT, VoI_Kosta}. In \cite{Ali_AoII}, the concept of Age of Incorrect Information (AoII) was introduced which is characterized as a function of both age and estimation error. In \cite{zhong2018two}, Age of Synchronization (AoS) was considered along with AoI to measure the freshness of a local cache. Urgency of Information (UoI) was proposed in \cite{zheng2020urgency} that captures the context-dependence of the status information along with AoI. Version AoI was introduced in \cite{yates2021age} which represents how many versions are outdated at the receiver compared to the transmitter. An AoI at Query (QAoI) metric was investigated in \cite{holm2021freshness}, \cite{Uysal_QAoI_ISIT} to capture the freshness only when required in a pull-based communication system. Value of Information (VoI),  defined by the Shannon mutual information was investigated in \cite{VoI_Kosta}. 


{\textbf{AoI in Sampling and Scheduling:} There exist numerous papers on AoI-based sampling and scheduling \cite{sun2019sampling, Chen_2022,  Shisher_2024, ornee2023whittle, xiong2022index, zou2021minimizing, chen2021scheduling, shisher2023learning, Pull_Ji, Ornee2021, Wiener_TIT, Ornee_SPAWC}. 
Authors in \cite{Pull_Ji} studied an AoI minimization problem under a pulling model that considers replicated requests to the server. 
In \cite{sun2019sampling}, sampling policies for optimizing non-linear AoI functions were studied. 
AoI minimization for single-hop networks was studied in \cite{kadota2018scheduling}. 
A joint sampling and scheduling problem to minimize monotonic AoI functions were considered in \cite{Bedewy_2021}. A Whittle index-based scheduling algorithm to minimize AoI for stochastic arrivals was considered in \cite{hsu2018age}. Authors in \cite{tripathi2019whittle} proposed a Whittle index policy for minimizing non-decreasing AoI functions. 
In \cite{Chen_2022}, the authors proposed a Whittle index-based scheduling policy to minimize the UoI modeled as Shanon entropy. Optimal scheduling policies for both single and multi-source systems were studied and a Whittle index policy was proposed for multi-source case in \cite{Shisher_2024}. 
The optimal sampling policies for Gauss-Markov processes were studied in \cite{Ornee2021, Wiener_TIT, Ornee_SPAWC} where the estimation error becomes a monotonic function of age in signal-agnostic scenarios and the associated problem for minimizing age-penalty functions were reported. 
A Whittle index policy for continuous-time Gauss-Markov processes for both signal-aware and signal-agnostic scenarios was reported in \cite{ornee2023whittle}. 
Besides Whittle index-based policies that require an indexability condition, non-indexable scheduling policies were also studied in \cite{xiong2022index, zou2021minimizing, chen2021scheduling, chen2022index, shisher2023learning}. In this paper, because of the complicated nature of state transition along with erasure channels, indexability is very difficult to establish. However, we provide a “Maximum Gain First Policy” developed in \cite{shisher2023learning, chen2022index}. The comparison of our model with \cite{shisher2023learning, chen2022index} is that our model considers erasure channels, signal observation, and general loss functions. 
Our scheduling policy is designed for a pull-based communication model where the scheduling decisions are based on the latest received observation and its AoI. We further proved that the developed policy is asymptotically optimal.

\begin{figure}
\centering
\includegraphics[width=9cm]{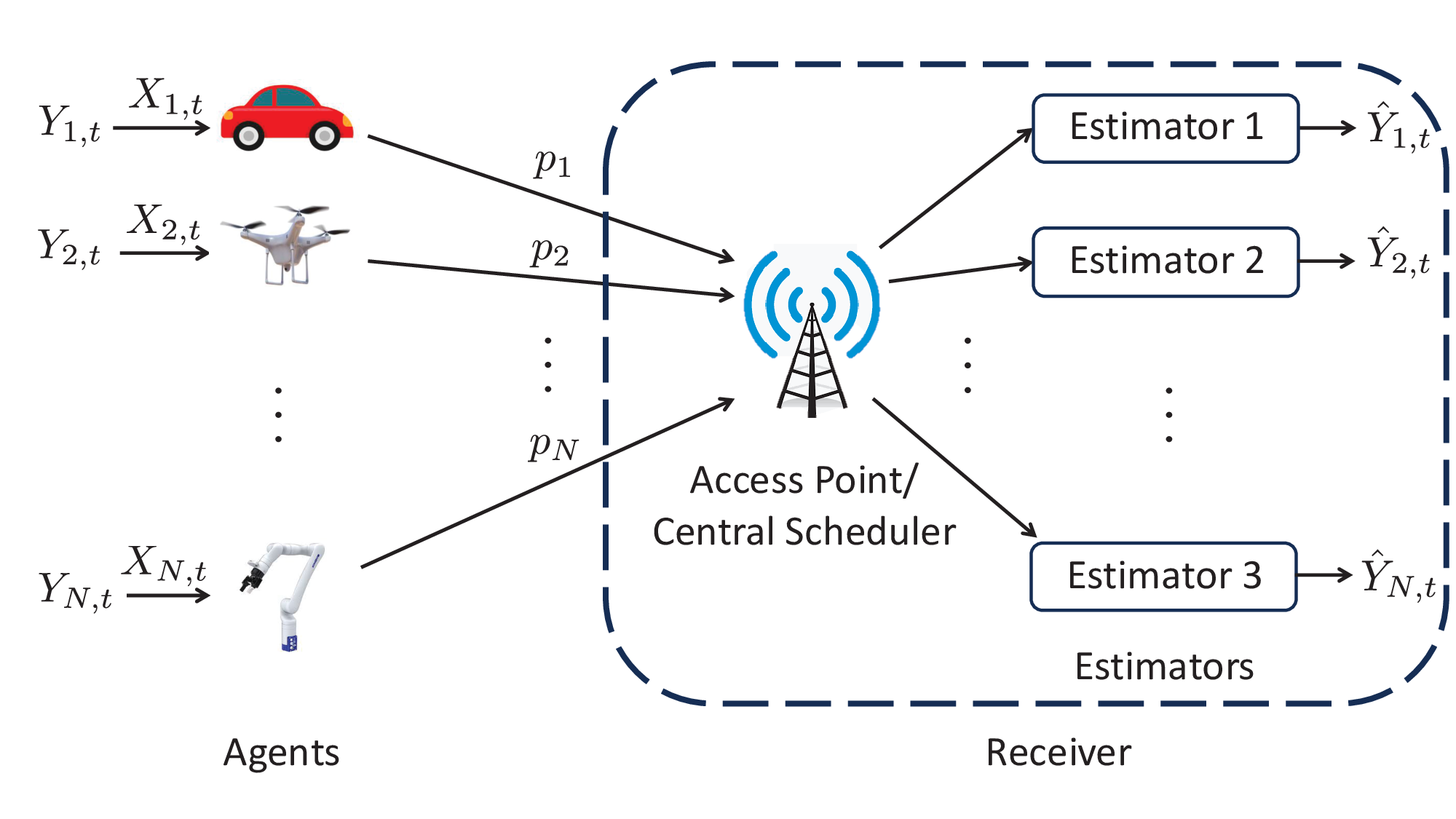}
\caption{A multi-agent, multi-channel safety monitoring system.}\vspace{-0.0cm}
\label{fig_model1}
\end{figure}

\textbf{Restless Bandits with Belief States:} RMAB problems are a well-established framework for studying sequential decision-making problems. Our problem focuses on an RMAB where each arm observes a finite state Markov process. There exists numerous closely related studies to our setting that focused on solving RMAB \cite{Liu_TIT, Ouyang_TMC, Ouyang_Infocom, Javidi_TIT, Ansell_2003, Neely_2010,  chen2021scheduling, chen2022index}. In \cite {Liu_TIT}, a Whittle index policy of a class of RMAB problems for dynamic multiaccess channels was studied. 
\cite{Ouyang_TMC} considered time-correlated fading channels with ARQ feedback. Opportunistic scheduling of multiple \emph{i.i.d.} channels is studied in \cite{Javidi_TIT}. 
In \cite{Neely_2010}, a multi-user wireless downlink has been studied for unknown channel states. 
\cite{Ansell_2003} developed an index heuristic for a multi-class queuing system with increasing convex holding cost rates. 
All of these studies tackle the problem by considering a belief MDP formulation (or POMDP formulation) using belief states (probability distribution over all possible states) \cite{Liu_TIT, Ouyang_TMC, Ouyang_Infocom, Neely_2010, Ansell_2003, chen2021scheduling, chen2022index}. Such formulations render the state space uncountable and leads to the curse of dimensionality \cite{Liu_TIT, Javidi_TIT, Neely_2010, Ansell_2003}. 
The difference between the formulation in \cite{Liu_TIT, Ouyang_TMC, Ouyang_Infocom, Javidi_TIT, Neely_2010, Ansell_2003, chen2021scheduling, chen2022index} and our problem is that we do not need to utilize belief states. By utilizing the sufficient statistic of the history observation (the latest received observation and the corresponding age value), we obtain a significantly smaller state space. 
Consequently, our framework is computationally more efficient compared to existing approaches \cite{Liu_TIT, Ouyang_TMC, Ouyang_Infocom, Javidi_TIT, Neely_2010, Ansell_2003, chen2021scheduling, chen2022index}.



\section{Model and Formulation}\label{model}

\subsection{System Model}
We consider the time-slotted, pull-based status-updating system, as depicted in Figure \ref{fig_model1}, where an access point retrieves the statuses 
of $N$ agents (e.g. cars, UAVs, robotic arms)
to monitor their safety levels (e.g., safe, cautious, dangerous). These statuses may include a car’s location on the road, images captured by a UAV-mounted camera, or the joint angles of a robotic arm in a factory. 
Let $X_{n, t} \in \mathcal X_n$ denote the status of agent $n$ at time $t$. 
We assume that $X_{n,t}$ follows a Markov chain and the processes \{$X_{n, t}, t = 0,1,2,\ldots$\} and \{$X_{m, t}, t = 0,1,2,\ldots$\} are independent for all $n\neq m$.
The safety level of agent $n$ is denoted as 
$Y_{n,t} \in \mathcal Y_n$.  We assume that $\mathcal{X}_n$ and $\mathcal{Y}_n$ are discrete and finite sets.
In addition, we make the following assumption: 
\begin{assumption}\label{assumption1}
As illustrated in Figure \ref{hidden_MC}, $Y_{n, t} \leftrightarrow X_{n, t} \leftrightarrow X_{n, t-1} \leftrightarrow X_{n, t-2} \leftrightarrow \cdots$ and $Y_{n, t} \leftrightarrow X_{n, t} \leftrightarrow X_{n, t+1} \leftrightarrow X_{n, t+2} \leftrightarrow \cdots$ form two Markov chains for all $t=0,1,2,\ldots$. 
\end{assumption}
In other words, the safety level $Y_{n,t}$ depends on the status process \{$X_{n,t}$, $t= 0,1,2,...$\} only through the current status $X_{n,t}$. For instance, if the safety level is a function of the agent's status, expressed as $Y_{n,t} = g(X_{n,t})$ and the agent's status follows a Markov chain $X_{n, t} \leftrightarrow X_{n, t-1} \leftrightarrow X_{n, t-2} \leftrightarrow \cdots$, then the Assumption \ref{assumption1} holds \cite[Sec. 2.8]{IT_Cover}.

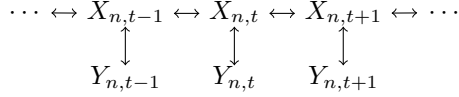
\begin{figure}
\centering
\begin{tikzpicture}
\matrix[matrix of math nodes,column sep=1em,row
sep=1.5em,cells={nodes={draw=none,minimum width=2em,inner sep=0pt}},
column 1/.style={nodes={draw=none}},
column 5/.style={nodes={draw=none}}] (m) {
\cdots & ~X_{n, t-1}~ & ~X_{n, t} ~& ~X_{n, t+1}~ & \cdots\\
& ~Y_{n, t-1}~ & ~Y_{n, t}~ & ~Y_{n, t+1}~ &\\
};
\foreach \X in {1, 2,3,4}
{\draw[<->] (m-1-\X) -- (m-1-\the\numexpr\X+1) ;
\ifnum\X=1,4
\else
\draw[<->] (m-1-\X) -- (m-2-\X) ;
\fi}
\end{tikzpicture}
\vspace{5pt}
\caption{Relationship between status $X_{n, t}$ and safety level $Y_{n, t}$.} 
\label{hidden_MC}
\vspace{-18pt}
\end{figure}

The agents transmit status update packets through $M$ unreliable wireless channels to the access point, where $N > M$. Let $\mu_n (t) \in \{0, 1\}$  denote the scheduling decision to  pull a packet from agent $n$ at time-slot $t$, defined as 
\begin{align} \label{decision}
\mu_n (t) \!\!=\!
\begin{cases}
& \!\!\!\!\!\!\!1, \!~\text{if status of agent $n$ is pulled in time-slot $t$},\\
& \!\!\!\!\!\!\!0, \!~\text{otherwise.}
\end{cases}
\end{align}
Upon receiving a pull request from the  access point, agent $n$ generates and transmits a time-stamped status packet $(X_{n,t},t)$ over one wireless channel. The transmission takes one time slot to reach the receiver. However, due to wireless channel fading, transmissions are subject to errors.
Let $p_n$ denote the probability of a successful transmission from agent $n$, irrespective of the selected wireless channel. 
Denote $\gamma_n (t) \in \{0, 1\}$ as the delivery indicator for agent $n$ at time-slot $t$, such that $\gamma_n (t)=1$ with probability $p_n$ if the transmission from agent $n$ is successful.
In other words, a packet will be received by the access point only if $\mu_n (t) =1$ and $\gamma_n (t) =1$. The delivery indicators $\gamma_n(t)$ are independent across both agents and time slots.

Due to channel sharing among the agents and transmission errors, the received information may not always be up to date. The most recent packet available at time $t$ is denoted as $X_{n, t-\Delta_n (t)}$, which was originally generated at time $t-\Delta_n (t)$. In remote estimation systems, the
time difference $\Delta_n (t)$ between the packet's generation time $t-\Delta_n (t)$ and the current time $t$ is known as \emph{age of information (AoI)} \cite{KaulYatesGruteser-Infocom2012, 2015ISITYates, Wiener_TIT, Ornee2021}. AoI quantifies the freshness of status updates received from agent $n$ and evolves according to:
\begin{align} \label{age}
\Delta_n (t+1) =
\begin{cases}
& \!\!\!\!\!\! 1, ~~~~~~~~~~~\text{if} {\thinspace} \mu_n (t) =1~{\thinspace} \text{and}~ {\thinspace} \gamma_n (t) =1, \\
& \!\!\!\!\!\! \Delta_n (t) +1, \text{otherwise.}
\end{cases}
\end{align}

\subsection{Safety Level Estimator}
At time slot $t$, estimator $n$ infers the safety level $Y_{n, t}$ using the up-to-date information at the receiver.
The risk due to unawareness of danger is quantified by  $L_n : \mathcal Y_n \times \mathcal Y_n \to \mathbb{R}$ for each agent $n$, where $L_n (y, \hat y)$ is the incurred loss if the actual safety level is $Y_{n,t} = y$ and the estimate is $\hat y$. The loss function $L_n (\cdot, \cdot)$ can be any function mapping from $\mathcal{Y}_n \times \mathcal{Y}_n$ to $\mathbb{R}$. 
An example of such a loss function is provided below.

\begin{example}\label{lossexample}
    Consider a scenario with three safety levels: {dangerous, cautious, safe}. If a dangerous situation is wrongly estimated as safe, the loss \emph{$L_n$(\emph{dangerous, safe})} will be significantly high due to the huge risk of damage. Conversely, if the safe situation is wrongly estimated as dangerous, the loss \emph{$L_n$(\emph{safe, dangerous})} will be small. This is because even if the estimation is incorrect, the risk of damage is small. Following this principle, the loss function is given by 
    \emph{$$L_n (\emph{dangerous, safe}) =1000,$$ 
    $$L_n (\emph{safe, dangerous}) = 5,$$
    $$L_n(\emph{cautious, safe}) =10,$$ 
    $$L_n(\emph{safe, cautious}) =1,$$
    $$L_n(\emph{cautious, dangerous}) =5,$$ 
    $$L_n(\emph{dangerous, cautious}) =100,$$}
    
    The loss for perfect estimation of the safety level is zero, expressed as, \emph{$L_n(\emph{safe, safe}) = L_n(\emph{cautious, cautious})= L_n(\emph{dangerous, dangerous}) =0.$}\footnote{In this example, if a dangerous situation is correctly detected, the loss is set as $L_n$(\emph{dangerous,dangerous})=0 due to perfect situational awareness. However, the loss function $L_n (\cdot, \cdot)$  can be freely designed according to the application. In other words, our paper is not limited to $L_n(\emph{dangerous, dangerous}) =0$ only.} One illustration of Example \ref{lossexample} is provided in Figure \ref{figure1}.
    \end{example} 

\begin{figure}
\centering
\includegraphics[width=7cm]{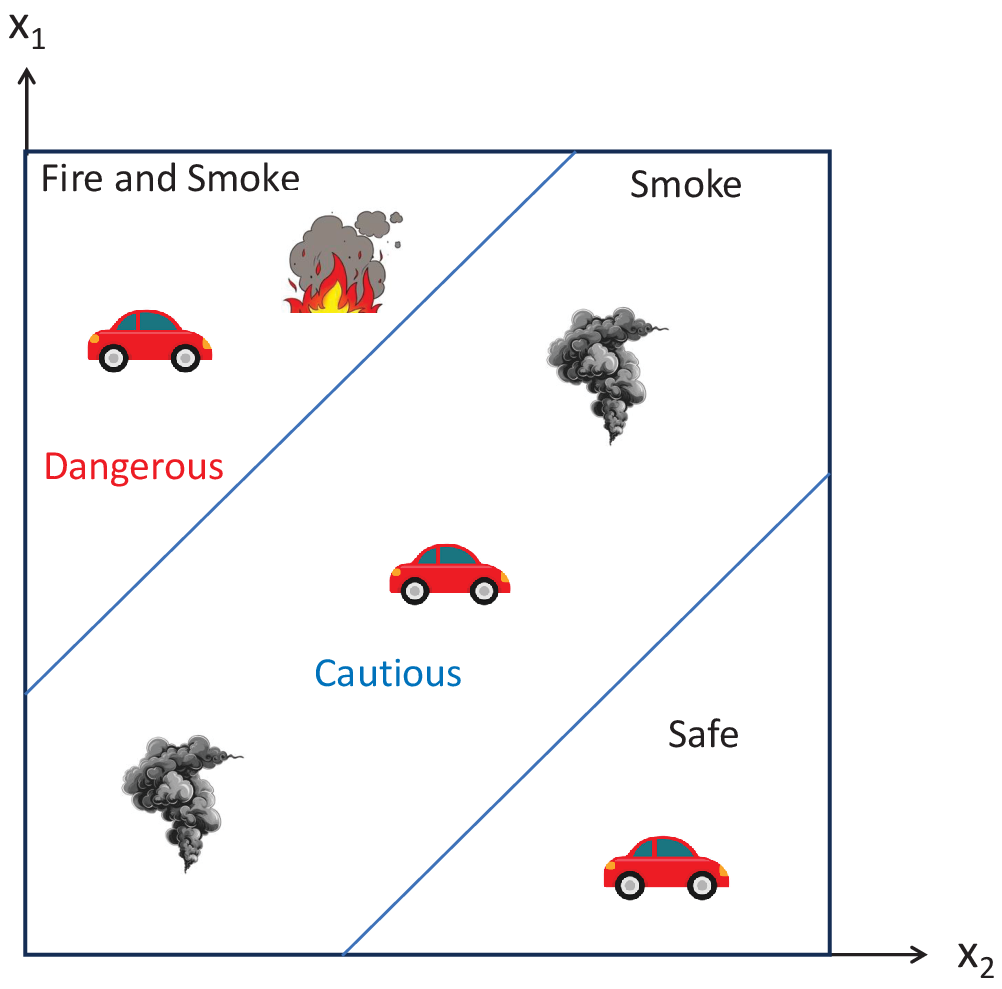} 
\vspace*{1.0mm} 
\caption{Safety regions with three possible situations \emph{dangerous, cautious,} and \emph{safe}, where the event of fire indicates a \emph{dangerous} situation, smoke indicates a \emph{cautious} situation, and \emph{safe}, otherwise.}\vspace{-0.0cm}
\label{figure1}
\end{figure} 

The loss values and the number of safety levels can be flexibly designed to fit different applications. By this, the loss function $L_n (y,\hat y)$ can be adjusted to meet specific situational awareness requirements, ensuring adaptability across various contexts. A key advantage of the  general loss function $L_n (\cdot, \cdot)$ is its ability to address safety risks arising from unawareness of potential danger—something traditional loss functions, such as $0$-$1$ loss, quadratic loss, and logarithmic loss, (See Appendix \ref{loss_definitions} for definitions) fail to capture. 



The information available to estimator $n$ at time $t$ is given by
\begin{align} \label{history}
\mathbf{H}_{n, t} = (X_{n, \tau-\Delta_n (\tau)}, \Delta_n (\tau), \mu_n (\tau-1), \gamma_n (\tau -1))_{\tau=0}^{t}
 ,
\end{align}
which includes all historically received packets, AoI values, scheduling decisions, and delivery indicators for agent $n$ up to time slot $t$.
}

\subsection{Jointly Optimal Scheduler and Estimator Design Problem} \label{joint_design}
We denote the estimator $n$ by a function $\phi_n (\mathbf{H}_{n, t}) \in \mathcal{Y}_n$ that maps any input from $\mathcal{H}_{n, t}$ to $\mathcal{Y}_n$. We also denote the set $\Phi_n$ as the set of all possible estimator functions $\phi_n (\cdot)$ that map any input from $\mathcal{H}_{n, t}$ to $\mathcal{Y}_n$.



Next, let $\pi = (\mu_n(0), \mu_n(1), \ldots)_{n=1}^N$
represent a scheduling policy, where 
$\mu_n (t)$ is defined in \eqref{decision}.
Let $\Pi$ denote the set of all  causal scheduling policies in which the scheduling decision $(\mu_n(t))_{n=1}^N$ at time slot $t$ is made based on the historical information $(\mathbf{H}_{n,t})_{n=1}^N$, where $\mathbf{H}_{n, t}$ is defined in \eqref{history}.  


Our goal is to find a scheduling policy $\pi$ and estimators $\phi_1, \phi_2, \ldots, \phi_N$ that minimize the time-average sum of the expected loss across all $N$ agents. The joint optimization problem is formulated as 
\begin{align}
\!\mathsf{L}_{\text{opt}}\! &=\! \inf_{\pi \in \Pi, \phi \in \Phi}\! \limsup_{T \to \infty}\! \sum_{n=1}^{N} \! \frac{1}{T} \!\sum_{t=0}^{T-1} \!\mathbb{E} \big[L_n(Y_{n, t}, \phi_n (\mathbf{H}_{n, t}))\big] \label{problem} \\
& \quad \quad \text{s.t.} \sum_{n=1}^{N} \mu_n (t) \leq M, \mu_n (t) \in \{0,1\}, t = 0,1, \ldots,\label{constraint}
\end{align}
where $\phi=(\phi_1, \phi_2, \ldots, \phi_n)$ is an $N$-tuple in which the $n$-th element $\phi_n$ is the $n$-th estimator function, $\Phi=\Phi_1 \times \Phi_2 \times \ldots \times \Phi_n$ is the $N$-fold cartesian product of all the tuples $\phi$, and $\mathsf{L}_{\text{opt}}$ is the optimum objective value of the problem \eqref{problem}-\eqref{constraint}. Since the system operates with $M$ available channels, the constraint $\sum_{n=1}^{N} \mu_n (t) \leq M$ ensures that at most $M$ agents can transmit in any given time slot.

\section{Problem Simplification}\label{probelm_simplification}


Problem \eqref{problem}-\eqref{constraint} is significantly challenging because  the available information $(\mathbf{H}_{n, t})_{n=1}^N$ is expanding at every time $t$. Using sufficient statistics, the problem \eqref{problem}-\eqref{constraint} can be greatly simplified, as outlined in the following theorem:

\begin{theorem}\label{thm_sufficient_statistic}
If Assumption \ref{assumption1} holds, then
the following assertions are true:

(i) $(\Delta_n (t), X_{n, t - \Delta_n (t)})$ 
is a sufficient statistic of $\mathbf{H}_{n, t}$ for estimating $Y_{n, t}$ at time $t$,

(ii) $(\Delta_n (t), X_{n, t - \Delta_n (t)})_{n=1}^N$ 
is a sufficient statistic of $(\mathbf{H}_{n, t})_{n=1}^N$ for making the scheduling decision $(\mu_n (t))_{n=1}^N$ at time $t$ in \eqref{problem}-\eqref{constraint}.

\end{theorem}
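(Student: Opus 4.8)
The plan is to prove each assertion by establishing the appropriate Markov screening relations and then, for part (ii), exploiting the resulting controlled-Markov structure. For part (i), the goal is to show that the posterior of the safety level collapses onto the freshest observation and its age,
$$\Pr(Y_{n,t}=y \mid \mathbf{H}_{n,t}) = \Pr\big(Y_{n,t}=y \mid \Delta_n(t),\, X_{n,t-\Delta_n(t)}\big).$$
First I would observe, from the AoI recursion \eqref{age}, that the generation time $t-\Delta_n(t)$ of the freshest received packet is the largest generation time among all packets recorded in $\mathbf{H}_{n,t}$; hence every other observation stored in the history was generated at some time $s \le t-\Delta_n(t)$. Conditioning on the current true status $X_{n,t}$, I would write
$$\Pr(Y_{n,t}=y \mid \mathbf{H}_{n,t}) = \sum_{x'} \Pr(Y_{n,t}=y \mid X_{n,t}=x',\, \mathbf{H}_{n,t})\,\Pr(X_{n,t}=x' \mid \mathbf{H}_{n,t}),$$
and the remaining work is to simplify the two factors.

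For the first factor, Assumption \ref{assumption1} (the backward chain $Y_{n,t} \leftrightarrow X_{n,t} \leftrightarrow X_{n,t-1} \leftrightarrow \cdots$) gives $Y_{n,t} \perp (\text{past } X)\mid X_{n,t}$; combined with the exogeneity of the delivery indicators $\gamma_n(\cdot)$ (independent of the $X,Y$ processes) and the causality of the scheduling decisions $\mu_n(\cdot)$ (deterministic functions of past observations and past $\gamma_n$), this yields $\Pr(Y_{n,t}=y \mid X_{n,t}=x',\mathbf{H}_{n,t}) = \Pr(Y_{n,t}=y \mid X_{n,t}=x')$. For the second factor, the Markov property of $\{X_{n,t}\}$ gives $X_{n,t} \perp \{X_{n,s}: s < t-\Delta_n(t)\}\mid X_{n,t-\Delta_n(t)}$, so that, again discharging the observation-measurable $\mu_n$ and the exogenous $\gamma_n$, we obtain $\Pr(X_{n,t}=x' \mid \mathbf{H}_{n,t}) = \Pr(X_{n,t}=x' \mid X_{n,t-\Delta_n(t)})$, the $\Delta_n(t)$-step transition probability, which depends on the history only through $(\Delta_n(t), X_{n,t-\Delta_n(t)})$. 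Substituting both simplifications proves (i).

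For part (ii), I would build on the cross-agent independence and the decomposable objective \eqref{problem}-\eqref{constraint}. By part (i) the minimal expected per-slot loss of agent $n$, namely $\inf_{\hat y}\mathbb{E}[L_n(Y_{n,t},\hat y)\mid \mathbf{H}_{n,t}]$, is a function of $(\Delta_n(t), X_{n,t-\Delta_n(t)})$ alone through the Bayes-optimal estimator. I would then verify that this pair is a controlled Markov state: from \eqref{age}, if the packet is delivered the next state is $(1, X_{n,t})$, whose conditional law given the current state is the $\Delta_n(t)$-step transition, and otherwise the next state is the deterministic $(\Delta_n(t)+1, X_{n,t-\Delta_n(t)})$; in both cases the transition depends on the past only through the current state and the action $\mu_n(t)$. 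Since the agents are mutually independent and both the cost and the constraint \eqref{constraint} decouple across agents, problem \eqref{problem}-\eqref{constraint} is a constrained MDP with state $(\Delta_n(t), X_{n,t-\Delta_n(t)})_{n=1}^N$, and standard MDP theory \cite{bertsekas2011dynamic} guarantees an optimal scheduler that is a function of the current state, establishing sufficiency for scheduling.

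The main obstacle will be rigorously discharging the conditioning on $\mu_n(\cdot)$ and $\gamma_n(\cdot)$ in part (i): because $\mu_n(\tau)$ is itself computed from past observations, it superficially appears to carry side information, and one must argue that, once the freshest observation is fixed, the Markov property screens off both the older observations and the (observation-measurable and exogenous) control and delivery variables. I expect to formalize this either by an induction on $t$ or by an explicit factorization of the joint law of $(\{X_{n,s}\}, Y_{n,t}, \{\mu_n(\cdot)\}, \{\gamma_n(\cdot)\})$ that exhibits the required conditional independencies.
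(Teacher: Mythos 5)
Your proposal is correct and follows essentially the same route as the paper's proof: both rest on (a) Assumption \ref{assumption1} plus the Markov property of $\{X_{n,t}\}$ to screen off all observations older than the freshest one, (b) the exogeneity of $\gamma_n(\cdot)$ and the observation-measurability of $\mu_n(\cdot)$ to discharge the control and delivery variables, and (c) the cross-agent independence together with standard average-cost MDP theory \cite{bertsekas2011dynamic} for part (ii). The only cosmetic difference is in part (i), where you condition on the current true status $X_{n,t}$ and factor the posterior into a $\Delta_n(t)$-step transition times $P_{Y_{n,t}|X_{n,t}}$, whereas the paper performs the two screening steps directly on $\mathbb{P}(Y_{n,t}\mid\mathbf{H}_{n,t})$; your explicit verification of the controlled-Markov transition kernel in part (ii) is a useful addition that the paper defers to its later MDP formulation.
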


\begin{proof}
See Appendix \ref{ptheorem1}.
\end{proof}


Theorem\ref{thm_sufficient_statistic}(i)-(ii) shows that $(\Delta_n (t), X_{n, t - \Delta_n (t)})$ is a sufficient statistics for finding both the optimal estimator and the optimal scheduler. Using the sufficient statistics, Lemma \ref{lemmaoptimalestimator} determines optimal estimation for any scheduling policy $\pi$.

\begin{lemma}\label{lemmaoptimalestimator}
   Under any scheduling policy $\pi \in \Pi$, if  $\Delta_n(t)=\delta$ and $X_{n, t-\Delta_n (t)}=x$, then the output $y_n$ of the $n$-th optimal estimator at time $t$ minimizes the following optimization problem:
\begin{align}\label{optimalestimator}
   \inf_{y_n \in \mathcal Y_n} \mathbb{E}_{Y \sim P_{Y_{n, t}|\Delta_n (t) = \delta, X_{n, t-\delta}=x}} \big[\!L_n(Y, y_n\!)\!\big].
\end{align}
\end{lemma}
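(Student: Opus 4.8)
The plan is to reduce the estimator-design problem to a pointwise minimization over the value of the sufficient statistic, exploiting Theorem~\ref{thm_sufficient_statistic}(i). First I would fix an arbitrary scheduling policy $\pi \in \Pi$ and consider the per-slot expected loss $\mathbb{E}[L_n(Y_{n,t}, \phi_n(\mathbf{H}_{n,t}))]$ for an arbitrary estimator $\phi_n \in \Phi_n$. Using the tower property of conditional expectation and conditioning on the full history $\mathbf{H}_{n,t}$, this equals $\mathbb{E}\big[\mathbb{E}[L_n(Y_{n,t}, \phi_n(\mathbf{H}_{n,t})) \mid \mathbf{H}_{n,t}]\big]$; since $\phi_n(\mathbf{H}_{n,t})$ is $\mathbf{H}_{n,t}$-measurable, the inner expectation is taken only over the randomness of $Y_{n,t}$ given $\mathbf{H}_{n,t}$.

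Next I would invoke sufficiency. By Theorem~\ref{thm_sufficient_statistic}(i), $(\Delta_n(t), X_{n,t-\Delta_n(t)})$ is a sufficient statistic of $\mathbf{H}_{n,t}$ for estimating $Y_{n,t}$, so the conditional distribution $P_{Y_{n,t}\mid \mathbf{H}_{n,t}}$ depends on $\mathbf{H}_{n,t}$ only through $(\Delta_n(t), X_{n,t-\Delta_n(t)})$, i.e.\ $P_{Y_{n,t}\mid \mathbf{H}_{n,t}} = P_{Y_{n,t}\mid \Delta_n(t), X_{n,t-\Delta_n(t)}}$. Consequently, for any realization with $\Delta_n(t)=\delta$ and $X_{n,t-\delta}=x$, the inner conditional expected loss equals $\mathbb{E}_{Y\sim P_{Y_{n,t}\mid \Delta_n(t)=\delta, X_{n,t-\delta}=x}}[L_n(Y, \phi_n(\mathbf{H}_{n,t}))]$, which depends on the history only through the value $\phi_n(\mathbf{H}_{n,t}) \in \mathcal{Y}_n$ that the estimator outputs.

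The final step is pointwise minimization. Because the expected loss is an average over the sufficient statistic of a quantity that, for each fixed $(\delta, x)$, depends on the estimator solely through the scalar choice $y_n = \phi_n(\mathbf{H}_{n,t}) \in \mathcal{Y}_n$, and because the class of estimators is rich enough to assign to each realization $(\delta,x)$ any value in $\mathcal{Y}_n$ (in particular an estimator that is a function of the sufficient statistic alone), the overall expectation is minimized by choosing, for every $(\delta,x)$, the value $y_n$ that solves $\inf_{y_n \in \mathcal{Y}_n} \mathbb{E}_{Y\sim P_{Y_{n,t}\mid \Delta_n(t)=\delta, X_{n,t-\delta}=x}}[L_n(Y, y_n)]$. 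Since $\mathcal{Y}_n$ is finite, the infimum is attained, so this pointwise minimizer is well-defined and yields the optimal estimator, establishing the claim.

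I expect the main obstacle to be the careful justification that the minimization can be carried out pointwise, specifically, verifying that no coupling between the estimator output and the scheduling policy's future decisions forces a joint optimization. Here it is crucial that $\phi_n$ affects only the loss term and not the evolution of $(\Delta_n(t), X_{n,t-\Delta_n(t)})$, so that for a fixed policy $\pi$ the per-slot losses decouple across realizations of the sufficient statistic, and the estimator can be optimized slot-by-slot and state-by-state without altering the distribution of the sufficient statistic itself.
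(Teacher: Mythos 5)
Your proposal is correct and follows essentially the same route as the paper's proof: invoke Theorem~\ref{thm_sufficient_statistic}(i) to replace conditioning on $\mathbf{H}_{n,t}$ with conditioning on $(\Delta_n(t), X_{n,t-\Delta_n(t)})$, decompose the expected loss as a weighted sum over realizations $(\delta,x)$, and minimize pointwise. Your explicit remark that the estimator output does not influence the evolution of the sufficient statistic (so the infimum can be pushed inside the time average and the sum over $(\delta,x)$) is a justification the paper uses only implicitly, and is a welcome addition.
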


\begin{proof}
See Appendix \ref{proof_lemma_joint}.
\end{proof}

By using Lemma \ref{lemmaoptimalestimator}, we construct an estimator function $f_n(\delta, x)$ that maps any $(\delta, x) \in \mathbb Z^{+} \times \mathcal X_n$ to $\mathcal Y_n$ as follows:
\begin{align}\label{est_func}
    f_n(\delta, x)=\argmin_{y_n \in \mathcal Y_n} \mathbb{E}_{Y \sim P_{Y_{n, t}|\Delta_n (t) = \delta, X_{n, t-\delta}=x}} \big[\!L_n(Y, y_n\!)\!\big]. 
\end{align}

Using Theorem \ref{thm_sufficient_statistic} and the estimator function in \eqref{est_func}, 
we obtain the following equivalent optimization problem. Given the estimator function $f_n$ defined in \eqref{est_func} for all $n=1,2, \ldots, N$, the problem \eqref{problem}-\eqref{constraint} is equivalent to 
\begin{align}
\!\mathsf{L}_{\text{opt}}\! =\!\! & \inf_{\pi \in \Pi_e}\! \limsup_{T \to \infty}\!\! \sum_{n=1}^{N} \!\! \frac{1}{T} \!\!\sum_{t=0}^{T-1} \!\mathbb{E} \!\big[\!L_n(Y_{n, t}, f_n (\Delta_n (t), X_{n, t-\Delta_n (t)})\!)\!\big] \label{problem_eq} \\
& ~\text{s.t.} \sum_{n=1}^{N} \mu_n (t) \leq M, \mu_n (t) \in \{0,1\}, t = 0,1, \ldots, \label{constraint_eq}
\end{align} 
where $\Pi_e \in \Pi$ is the set of all causal scheduling policies in which the scheduling decision
$(\mu_n(t))_{n=1}^{N}$ at time slot $t$ is made by using all the latest received observations $(X_{n, t-\Delta_n (t)})_{n=1}^{N}$ and their AoI values $(\Delta_n (t))_{n=1}^{N}$.
\!\!

By this, we obtain a new problem \eqref{problem_eq}-\eqref{constraint_eq} with a reduced state space, where the state $(\Delta_n (t), X_{n, t-\Delta_n(t)})$ of the $n$-th agent includes both the latest received observation $X_{n, t-\Delta_n (t)}$ of agent $n$ and its AoI $\Delta_n (t)$ at time slot $t$.


\begin{remark}
One major contribution of our study is the reduction of the state space using the sufficient statistic of the history (i.e., the latest received observation $X_{n , t-\Delta_n (t)}$ and its AoI $\Delta_n (t)$) while  existing studies utilized belief MDP or POMDP formulation. The belief states used in \cite{Liu_TIT, Ouyang_TMC, Ouyang_Infocom, Javidi_TIT, Ansell_2003, Neely_2010, chen2021scheduling, chen2022index} help  reduce the state space. However, the state space is still uncountable \cite{Liu_TIT, Javidi_TIT, Ansell_2003, Neely_2010}. Although some attempts have been made to make the state space countable under a positive recurrent assumption and using a sufficiently large truncated AoI value, the state space still exhibits a quadratic increase with the AoI \cite{chen2021scheduling, chen2022index, Ouyang_Infocom, Ouyang_TMC}. 
Specifically, for a truncated set $\{1, 2, \ldots, \tau\}$ of AoI values, the state space increases as $\tau \times |\mathcal X_n|^2$, where $X_{n, t} \in \mathcal X_n$ represents the $n$-th bandit process. The difference between the formulation in \cite{chen2021scheduling, chen2022index, Ouyang_Infocom, Ouyang_TMC} and problem \eqref{problem_eq}-\eqref{constraint_eq} is that we do not need to utilize belief states. By utilizing the sufficient statistic of the history observation, i.e., the latest received observation and the corresponding AoI value, we obtain a significantly smaller state space which demonstrate linear growth with AoI, such as $\tau \times |\mathcal X_n|$.
\end{remark}

The problem \eqref{problem_eq}-\eqref{constraint_eq} characterizes the fundamental performance limits in maximizing situational awareness through pull-based remote estimation. Specifically, the objective function of Problem \eqref{problem_eq}-\eqref{constraint_eq} can be interpreted using information-theoretic metrics. To this end, we consider the generalized entropy  $H_L(Y)$ of a random variable  $Y$ associated with a loss function  $L: \mathcal Y \times \mathcal Y \rightarrow R$, defined as \cite{dawid1998coherent, farnia2016minimax, Shisher2022}.
\begin{align} \label{gen_def_L_entropy}
H_L (Y) = \min _{y \in \mathcal{Y}} \mathbb{E}_{Y \sim P_Y} [L (Y, y)].
\end{align}
Then, the $L$-conditional entropy of $Y$ given $X = x$ and the $L$-conditional entropy of $Y$ given $X$ can be defined as  \cite{dawid1998coherent, farnia2016minimax, Shisher2022}
\begin{align} \label{L_cond_en}
H_L (Y | X =x)
=& \min _{y \in \mathcal{Y}} \mathbb{E}_{Y \sim P_{Y|X=x}} [L (Y, y)],\\  
H_L (Y | X)
=& \sum_{x \in \mathcal{X}} P_X (x) H_L (Y | X =x). \label{L_cond_en_1}
\end{align}
Equations \eqref{L_cond_en} and \eqref{L_cond_en_1} along with the Data Processing Inequality \cite[Sec. 2.8]{IT_Cover} will be utilized in Section \ref{discussions} to get a nice property of the scheduling policy.

Building upon the insights of \cite{shisher2021age, Shisher_2024, shisher2023learning}, we utilized $L$-conditional entropy $H_L(Y_t|X_{t-\Delta(t)}=x, \Delta(t)=\delta)$ given both the AoI $\delta$ and the knowledge of the received observation $x$ to measure the impact of the AoI and the information content in remote estimation and prediction. 
Compared to \cite{shisher2021age, Shisher_2024, shisher2023learning}, we consider a signal-aware scheduling scheme (the decision-maker utilizes the knowledge of the signal value) with the goal of minimizing the performance loss caused by situational unawareness where \cite{shisher2021age, Shisher_2024, shisher2023learning} focused on signal-agnostic scenario (the decision-maker does not utilize signal value).
\begin{lemma} \label{lemma_entropy}
It holds that
\begin{align} \label{equality}
H_L (Y_{n, t} |\Delta_n(t), X_{n, t-\Delta_n(t)})=\sum_{(\delta, x) \in \mathbb Z \times \mathcal X_n} P_{\Delta_n(t),X_{n, t-\delta}}(\delta,x)\inf_{f_n(\delta, x) \in \mathcal Y_n} \mathbb{E}_{Y \sim P_{Y_{n, t}|\Delta_n (t) = \delta, X_{n, t-\delta}=x}} \big[L_n(Y, f_n (\delta, x))\big],
\end{align}
where $H_L (Y_{n, t} |\Delta_n(t), X_{n, t-\Delta_n(t)})$ is the generalized conditional entropy of $Y_{n,t}$ given the latest received observation $X_{n, t-\Delta_n(t)}$ at time slot $t$ and it's AoI value $\Delta_n(t)$.
\end{lemma}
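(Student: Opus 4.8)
The plan is to recognize that the left-hand side of \eqref{equality} is nothing more than the generalized $L$-conditional entropy of $Y_{n,t}$ in which the conditioning random variable is the \emph{pair} $Z_n(t) := (\Delta_n(t), X_{n, t-\Delta_n(t)})$, and then to unfold the definitions \eqref{L_cond_en}--\eqref{L_cond_en_1} as applied to this paired variable. No probabilistic machinery beyond the definitions themselves is required; the argument is essentially careful bookkeeping of notation. I therefore do not expect a genuine mathematical obstacle, only a mild subtlety in matching the estimator-function notation $f_n(\delta,x)$ used on the right-hand side to the scalar minimization variable $y$ appearing in \eqref{L_cond_en}.

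First, I would treat $Z_n(t)=(\Delta_n(t), X_{n,t-\Delta_n(t)})$ as a single random variable taking values $(\delta,x)$ with joint law $P_{\Delta_n(t),X_{n,t-\delta}}(\delta,x)$. Substituting $Z_n(t)$ in place of the generic conditioning variable $X$ in \eqref{L_cond_en_1} gives
\begin{align*}
H_L\big(Y_{n,t} \mid \Delta_n(t), X_{n,t-\Delta_n(t)}\big)
= \sum_{(\delta,x)} P_{\Delta_n(t),X_{n,t-\delta}}(\delta,x)\, H_L\big(Y_{n,t} \mid \Delta_n(t)=\delta,\, X_{n,t-\delta}=x\big).
\end{align*}
Second, I would apply the pointwise definition \eqref{L_cond_en} to each inner term, obtaining
\begin{align*}
H_L\big(Y_{n,t} \mid \Delta_n(t)=\delta,\, X_{n,t-\delta}=x\big)
= \min_{y \in \mathcal Y_n} \mathbb{E}_{Y \sim P_{Y_{n,t}|\Delta_n(t)=\delta,\, X_{n,t-\delta}=x}}\big[L_n(Y, y)\big].
\end{align*}

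Finally, I would address the only point that needs care: for each fixed $(\delta,x)$ the estimator output $f_n(\delta,x)$ is precisely a free decision value ranging over $\mathcal Y_n$, so the inner $\min_{y \in \mathcal Y_n}$ coincides with $\inf_{f_n(\delta,x) \in \mathcal Y_n}$. Since $\mathcal Y_n$ is finite (as assumed in the model), the infimum is attained and equals the minimum, so the two expressions are identical termwise. Substituting this identity into the sum from the first step then reproduces exactly the right-hand side of \eqref{equality}, which completes the proof. I would close by remarking that this lemma is what licenses the information-theoretic reading of the objective in \eqref{problem_eq}: the optimal estimator $f_n$ from \eqref{est_func} makes each per-state loss equal to the corresponding conditional $L$-entropy, so the time-averaged loss is governed by $H_L(Y_{n,t}\mid \Delta_n(t), X_{n,t-\Delta_n(t)})$.
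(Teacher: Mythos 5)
Your proposal is correct and matches the paper's own proof, which likewise just unfolds the definitions \eqref{L_cond_en}--\eqref{L_cond_en_1} with the pair $(\Delta_n(t), X_{n,t-\Delta_n(t)})$ as the conditioning variable and identifies the inner minimization over $y \in \mathcal Y_n$ with the infimum over the estimator value $f_n(\delta,x)$. Your version is, if anything, slightly more explicit about why the two minimizations coincide.
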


\begin{proof}
See Appendix \ref{proof_lemma_entropy}.
\end{proof}

The probability $P_{\Delta_n(t),X_{n, t-\delta}}(\delta,x)$ in Lemma \ref{lemma_entropy} depends on the scheduling policy $\pi$.
By using Lemma \ref{lemma_entropy}, problem \eqref{problem_eq}-\eqref{constraint_eq} can be written as
\begin{align}
\mathsf{L}_{\text{opt}}= 
& \inf_{\pi \in \Pi} \limsup_{T \to \infty} \sum_{n=1}^{N} \frac{1}{T} \sum_{t=0}^{T-1} H_L (Y_{n, t}|\Delta_n (t), X_{n, t-\Delta_n (t)})  \label{problem_eq_1} \\
& ~\text{s.t.} \sum_{n=1}^{N} \mu_n (t) \leq M, \mu_n (t) \in \{0,1\}, t = 0,1, \ldots.\label{constraint_eq_1}
\end{align}

\section{Scheduling Policy Design for Solving \eqref{problem_eq_1}-\eqref{constraint_eq_1}}


Problem \eqref{problem_eq_1}-\eqref{constraint_eq_1} is a Restless Multi-armed Bandit (RMAB). The problem \eqref{problem_eq_1}-\eqref{constraint_eq_1} is called “restless” because the state and the penalty of each bandit $n$ continues to evolve over time, regardless of whether the agent $n$ is selected for transmission \cite{whittle_restless}. Because we are able to simplify the original problem \eqref{problem}-\eqref{constraint}, the RMAB \eqref{problem_eq_1}-\eqref{constraint_eq_1} has a reduced state space compared to \eqref{problem}-\eqref{constraint}. However,  even with the reduced state space, solving RMAB problems and finding optimal solutions are significantly challenging.
A Whittle index policy is known to be an efficient approach to solving RMAB problems which requires to satisfy a condition called indexability\cite{whittle_restless}, \cite{weber1990index}. A key challenge in solving problem \eqref{problem_eq_1}-\eqref{constraint_eq_1} is that indexability is very difficult to establish. This difficulty arises due to the following reasons: (i) The state of each bandit of RMAB \eqref{problem_eq_1}-\eqref{constraint_eq_1} exhibits a complicated transition, (ii) the transmission channels are unreliable, and (iii) the expected penalty associated with each bandit can be non-monotonic function of the AoI while most of the previous studies considered monotonic penalty functions of AoI \cite{kosta2017age, sun2019sampling, sun2017update, tripathi2019whittle}.
Hence, \eqref{problem_eq_1}-\eqref{constraint_eq_1} is a more challenging problem than the problems studied in \cite{kosta2017age, sun2019sampling, sun2017update, tripathi2019whittle, Chen_2022, chen2022index}. 
However, we are able to develop a Maximum Gain First (MGF) policy that does not need to satisfy indexability. 

We solve problem \eqref{problem_eq_1}-\eqref{constraint_eq_1} in three-steps: (i) We first relax constraint \eqref{constraint_eq_1} and utilize Lagrangian dual decomposition to decompose the original problem into separated per-bandit problems; (ii) Next, we develop a Maximum Gain First (MGF) policy that does not need to satisfy any indexability condition; (iii) Finally, we prove that the developed policy is asymptotically optimal.


\subsection{Relaxation and Lagrangian Decomposition}
In standard RMAB problems, the constraint \eqref{constraint_eq_1} needs to be satisfied with equality, i.e., exactly $M$ bandits are activated at any time slot $t$. However, in our problem, constraint \eqref{constraint_eq_1} activates less than $M$ bandits at any time $t$. Following \cite[Section 5.1.1]{verloop2016asymptotically}, \cite[Section IV-A]{ornee2023whittle}, we introduce $M$ additional \emph{dummy bandits} that never change state and therefore, they incur no penalty. If a \emph{dummy bandit} is activated, it occupies one channel but does not incur any penalty. Let $\mu_0 (t) \in \{1, 2, \ldots, M\}$ be the number of \emph{dummy bandits} that are activated at time slot $t$. These dummy bandits are introduced to establish the asymptotic optimality discussed in Section \ref{asymptotic optimality}. After incorporating these \emph{dummy bandits}, the RMAB \eqref{problem_eq_1}-\eqref{constraint_eq_1} can be expressed as
\begin{align}
\mathsf{L}_{\text{opt}}\! =
& \inf_{\pi \in \Pi}\! \limsup_{T \to \infty}\sum_{n=1}^{N} \frac{1}{T} \sum_{t=0}^{T-1} H_L (Y_{n, t}|\Delta_n (t), X_{n, t-\Delta_n (t)}) \label{problem_dummy} \\
& ~\text{s.t.} \sum_{n=0}^{N} \mu_n (t) = M, \label{constraint_dummy}\\
& ~~~~~\mu_0 (t)\!\! \in \!\{1, 2, \ldots, M\}, t = 0,1, \ldots,\\
& ~~~~~\mu_n (t) \in \{0, 1\}, n = 1,2, \ldots, t = {0,1,\ldots}, \label{constraint_dummy_2}
\end{align}
which is an RMAB with an equality constraint. 
Because the \emph{dummy bandits} never change state, problem \eqref{problem_eq_1}-\eqref{constraint_eq_1} and \eqref{problem_dummy}-\eqref{constraint_dummy_2} are equivalent. 

Next, we follow the standard relaxation and Lagrangian decomposition procedure for RMAB \cite{whittle_restless} and relax the constraint \eqref{constraint_dummy} and obtain the following relaxed problem: 
\begin{align} 
\mathsf{L}_{\text{opt}} =
& \inf_{\pi \in \Pi} \!\limsup_{T \to \infty} \!\! \sum_{n=1}^{N} \frac{1}{T} \sum_{t=0}^{T-1} H_L (Y_{n, t}|\Delta_n (t), X_{n, t-\Delta_n (t)}), \label{relaxed_problem}  \\
& ~\text{s.t.} \limsup_{T \to \infty}  \sum_{n=0}^{N} \mathbb{E} \bigg[\frac{1}{T} \sum_{t=0}^{T-1} \mu_n (t) \bigg] = M, \label{relaxed_cons_2}\\
& ~~~~~\mu_0 (t) \in \{1,2,\ldots, M\}, t = {0,1,\ldots}, \\
& ~~~~~\mu_n (t) \in \{0, 1\}, n = 1,2, \ldots, t = {0,1,\ldots}. \label{relaxed_cons}
\end{align}
The relaxed constraint \eqref{relaxed_cons_2} needs to be satisfied on average, instead of satisfying at every time slot $t$. To solve the relaxed problem \eqref{relaxed_problem}-\eqref{relaxed_cons}, we take a dual cost $\lambda \geq 0$ (also known as Lagrange multiplier) for the relaxed constraint. The dual problem is given by
\begin{align} \label{dual_problem}
\sup_{\lambda \geq 0} \bar L (\lambda), 
\end{align}
where 
\begin{align} \label{decomposed}
\bar L (\lambda) =
 \!\inf_{\pi \in \Pi} \!\limsup_{T \to \infty} \bigg[ \sum_{n=1}^{N} \!\frac{1}{T} \!\!\sum_{t=0}^{T-1}\!  H_L (Y_{n, t}|\Delta_n (t), X_{n, t-\Delta_n (t)}) + \lambda \bigg(\mathbb{E} \bigg[\sum_{n=0}^{N} \mu_n (t)\bigg] - M\bigg) \bigg].
\end{align}
The term $\frac{1}{T} \sum_{t=0}^{T-1} \sum_{n=0}^{N} \lambda M$ in \eqref{decomposed} does not depend on policy $\pi$ and hence can be removed. For a given $\lambda$, problem \eqref{decomposed} can be decomposed into $(N+1)$ separated sub-problems and each sub-problem associated with agent $n$ is formulated as
\begin{align} \label{per_arm_problem}
\bar L_n (\lambda) =
\inf_{\pi_n \in \Pi_n} \!\!\limsup_{T \to \infty} \bigg[\frac{1}{T} \sum_{t=0}^{T-1} H_L (Y_{n, t}|\Delta_n (t), X_{n, t-\Delta_n (t)}) + \lambda \mathbb{E} [\mu_n (t)] \bigg],
\end{align}
where $\bar L_n (\lambda)$ is the optimum value of \eqref{per_arm_problem}, $\pi_n= (\mu_n (0), 
\mu_n (1), \ldots)$ is the sub-scheduling policy for agent $n$, and $\Pi_n$ is the set of all causal sub-scheduling policies of agent $n$. Problem \eqref{per_arm_problem} is a per-bandit problem associated with bandit $n$. On the other hand, the sub-problem associated with the \emph{dummy bandits} is given by
\begin{align}
\bar L_0 (\lambda) = \inf_{\pi_0 \in \Pi_0} \limsup_{T \to \infty} \mathbb{E} \bigg[\frac{1}{T} \sum_{t=0}^{T-1} \lambda \mu_0 (t)\bigg], \label{per_arm_problem_dummy}
\end{align}
where $\bar L_0 (\lambda)$ is the optimum value of \eqref{per_arm_problem_dummy}, $\pi_0= \{\mu_0 (t), t = 0,1,\ldots\}$, and $\Pi_0$ is the set of all causal activation policies $\pi_0$.

{ \subsection{Maximum Gain First (MGF) Policy}


For a given transmission cost $\lambda$, the per-bandit problem \eqref{per_arm_problem} can be cast as an average-cost infinite horizon 
MDP with state $(\Delta_n (t), X_{n, t-\Delta_n (t)})$. The state associated with each bandit $n$ is the latest received observation $X_{n, t-\Delta_n (t)}$ and its AoI $\Delta_n (t)$ at time slot $t$. The action is defined by $\mu_n (t) \in \{0, 1\}$ which denotes the scheduling decision for agent $n$ at every time slot $t$. Each MDP associated with each bandit $n$ has two actions: active and passive. 
If a packet from agent $n$ is requested and submitted to a channel at time slot $t$, then restless bandit $n$ takes an active action at time slot $t$; otherwise, bandit $n$ is made passive at time slot $t$.

If bandit $n$ is not scheduled for transmission (i.e., $\mu_n (t) = 0$), then the AoI will increase by 1, i.e., $\Delta_n (t) = \Delta_n (t-1) +1$ and the observation $X_{n, t-\Delta_n (t)}$ will remain in the same state. If bandit $n$ is scheduled for transmission (i.e., $\mu_n (t) =1$) and the transmission succeeds with probability $p_n$, then the AoI will drop to 1, i.e., $\Delta_n (t) = 1$ and the observation will change to a new received value $X_{n, t-1}$; otherwise, if transmission fails with probability $1-p_n$, then the AoI will increase by 1, i.e., $\Delta_n (t) = \Delta_n (t-1) +1$ and the observation will remain the same. 

We solve \eqref{per_arm_problem} by using dynamic programming \cite{bertsekas2011dynamic}.} 
The Bellman optimality equation for the MDP in \eqref{per_arm_problem} for given $\Delta_n (t) = \delta$ and $X_{n, t-\Delta_n (t)} =x$ is 
\begin{align}
h_{n, \lambda} (\delta, x) = \min_{\mu \in \{0,1\}} Q_{n, \lambda} (\delta, x, \mu),
\end{align}
where $h_{n, \lambda} (\delta, x)$ is the relative-value function of the average-cost MDP and $Q_{n, \lambda} (\delta, x, \mu)$ is the relative action-value function defined as
\begin{align} \label{action_function}
Q_{n, \lambda} (\delta, x, \mu) =
\begin{cases}
q_n (\delta, x) - \bar L_n (\lambda) + h_{n, \lambda} (\delta+1, x) , & \text{if} {\thinspace} \mu =0, \\
q_n (\delta, x) - \bar L_n (\lambda) + (1-p_n) h_{n, \lambda} (\delta+1, x)
+ p_n  \mathbb{E} [h_{n, \lambda} (1, X_{n, 0}) | X_{n, -\delta} = x]+ \lambda,  
& \text{otherwise},
\end{cases}
\end{align}
where $q_n (\delta, x)$ is given by
\begin{align} \label{loss_ex}
q_n (\delta, x) = H_L (Y_{n, t} |\Delta_n (t) = \delta, X_{n, t-\delta} =x),   
\end{align}
and \eqref{action_function} holds because $X_{n, t}$ is a time-homogeneous Markov chain.
The relative value function $h_{n, \lambda} (\delta, x)$ can be computed by using relative value iteration algorithm for average-cost MDP \cite{bertsekas2011dynamic}. 

{ Let $\pi_{n, \lambda} ^{*} = (\mu_{n, \lambda} ^{*} (1), \mu_{n, \lambda} ^{*} (2), \ldots)$ be an optimal solution to \eqref{per_arm_problem}. The optimal decision at time slot $t$ for bandit $n$ is given by
\begin{align} \label{action}
\mu_{n, \lambda} ^{*} (t) = \argmin_{\mu \in \{0,1\}} Q_{n, \lambda} (\Delta_n (t), X_{n, t-\Delta_n (t)}, \mu),
\end{align}
where the dual cost is iteratively updated using the stochastic dual sub-gradient ascent method with step size $\beta > 0$ \cite{nedic2008subgradient}:
\begin{align} \label{lambda_update}
\lambda (j+1) = \lambda (j) + \frac{\beta}{j} \bigg(\frac{1}{T} \sum_{n=0}^{N} \sum_{t=0}^{T-1} \mu^*_{n, \lambda (j)} (t) - M\bigg),
\end{align}
for $j$-th iteration.
Let $\lambda^ *$ be the optimal dual cost to problem \eqref{dual_problem} to which $\lambda (t)$ converges. Then, we can apply $(\pi_{n, \lambda^{*}})_{n=0}^{N}$ for the relaxed problem \eqref{relaxed_problem}-\eqref{relaxed_cons}. But applying this policy to the  problem \eqref{problem_dummy}-\eqref{constraint_dummy} may violate the constraint \eqref{constraint_dummy}. 

\begin{algorithm}[t] 
\caption{Maximum Gain First Policy for Solving \eqref{problem_eq_1}-\eqref{constraint_eq_1}} \label{algo1}
\begin{algorithmic}[1]
\State At time $t=0$:
\State Input $\alpha_{n} (\delta, x)$ in \eqref{net_gain} for every bandit $n=1,2,\ldots,N$. 
\State For all time $t = 0, 1, \ldots$,
\State Update $(\Delta_n (t), X_{n, t-\Delta_n (t)})$ for all bandits $n=1,2,\ldots, N$.
\State Update current ``gain" $\alpha_{n} (\Delta_n (t), X_{n, t-\Delta_n (t)})$ for all bandits $n=1,2,\ldots, N$.
\State Choose at most $M$ bandits with the highest positive ``gain". 
\end{algorithmic}
\end{algorithm}

\begin{definition}[\textbf{Gain Index}]\label{gainindex}
Following \cite{chen2022index, shisher2023learning}, we define the ``gain" $\alpha_{n} (\delta, x)$ as
\begin{align} \label{net_gain}
\alpha_{n} (\delta, x) = Q_{n, \lambda^*} (\delta, x, 0) - Q_{n, \lambda^*} (\delta, x, 1).
\end{align}
If $Q_{n, \lambda^*} (\delta, x, 0) > Q_{n, \lambda^*} (\delta, x, 1)$, i.e., $\alpha_{n} (\delta, x) > 0$, it is optimal to schedule bandit $n$. If $Q_{n, \lambda^*} (\delta, x, 0) < Q_{n, \lambda^*} (\delta, x, 1)$, i.e., $\alpha_{n} (\delta, x) < 0$, it is optimal to not to schedule bandit $n$.
\end{definition}

The Algorithm for solving RMAB \eqref{problem_eq_1}-\eqref{constraint_eq_1} is provided in Algorithm \ref{algo1} which activates the at most $M$ bandits with the highest positive ``gain" index at any time slot $t$. We prove the asymptotic optimality of the Maximum Gain First (MGF) policy in Section \ref{asymptotic optimality}.

\begin{remark}
Our method of developing the MGF policy in Algorithm \ref{algo1} by optimizing the average conditional entropy in \eqref{problem_eq_1}-\eqref{constraint_eq_1} is conceptually similar to the "water-filling" problem. In "water-filling", power resources are allocated to channels with lower noise and higher Signal-to-noise Ratio (SNR). Similarly, in this study, we obtain the "gain" for given $\Delta_n (t) = \delta$ and $X_{n, t-\Delta_n (t)} =x$ defined in Definition \ref{def2}. Then, we schedule agent $n$ with the highest gain, which effectively reduces the conditional entropy.
\end{remark}


\begin{remark}
Further discussion on a special case of single-source, single-channel (i.e., $N=M=1$) is provided in Appendix \ref{single_source}. For this special case, we find that for single-source, single-channel, it is always better to send. Earlier studies \cite{Sun_TIT_2017, sun2019sampling, Shisher_2024} reported that for single-source, single-channel scenario, it is not always better to take the active action for the following two cases: (i) if the penalty function is a non-monotonic function of the age, (ii) if the transmission times are random. However, in our study, though the penalty $q_1 (\delta, x)$ is not necessarily a monotonic function of the age, we show that it is always better to send. 
\end{remark}

\subsection{Asymptotic Optimality} \label{asymptotic optimality}
In this section, we demonstrate that the ``MGF Policy" in Algorithm \ref{algo1} is asymptotically optimal as the number of agents increases while maintaining a fixed ratio between the number of agents and the number of channels. Let $\pi_{\text{gain}}$ denote the policy presented in Algorithm \ref{algo1}. Following the standard practice, we consider a set of bandits to be in the same class if they share identical penalty functions and transition probabilities. 

\begin{definition}[\textbf{Asymptotic optimality}]\label{defAsymptotically optimal}
Consider a ``base'' system with $N$ bandits, $M$ channels, and $M$ dummy bandits. Let $\mathsf{L}^r_{\text{gain}}$ represent the long-term average cost under policy $\pi_{\text{gain}}$ for a system that includes $rN$ bandits, $rM$ channels, and $rM$ dummy bandits with $N+1$ class of bandits including one dummy bandit class with $r \in \mathbb{Z}^{+}$. The policy $\pi_{\text{gain}}$ is asymptotically optimal if $\mathsf{L}^r_{\text{gain}}=\mathsf{L}^r_{\text{opt}}$ as the number of bandits in each class increases by $r$ times.  
\end{definition}

We denote by $V^n_{\delta, x}(t)$ be the fraction of class $n$ bandits in state $(\delta, x)$ at time $t$. Then, we define

\begin{align}
 v^n_{\delta, x}=\limsup_{T\rightarrow \infty}\sum_{t=0}^{T-1} \frac{1}{T} \mathbb E [ V^n_{\delta, x}(t)].
\end{align}
We further use the vectors $\mathbf{V}^n(t)$ and $\mathbf{v}^n$ to contain $V^n_{\delta, x}(t)$ and $v^n_{\delta, x}$, respectively for all $\delta=1, 2, \ldots, \delta_{\mathrm{bound}}$ and $x \in \mathcal X$. Truncated AoI space will have little to no impact if $\delta_{\mathrm{bound}}$ is very large. This is because $L_n(\delta, x)$ for any $x \in \mathcal X$ achieves a stationary point as the AoI $\delta$ increases to a large value.  

For a policy $\pi$, we can have the following mapping 
\begin{align}
\Psi_{\pi} ((\mathbf{v}^{n})_{n=1}^N)=\mathbb{E} [(\mathbf{V}^n (t+1))_{n=1}^N | (\mathbf{V}^n (t))_{n=1}^N  = (\mathbf{v}^n)_{n=1}^N].
\end{align}
We define the $t$-th iteration of the maps $\Psi_{\pi, t \geq 0} (\cdot)$ as follows
\begin{align}
& \Psi_{\pi, 0} ((\mathbf{v}^n)_{n=1}^N) = (\mathbf{v}^n)_{n=1}^N, \\
& \Psi_{\pi, t+1} ((\mathbf{v}^n)_{n=1}^N) = \Psi_{\pi}(\Psi_{\pi, t}((\mathbf{v}^n)_{n=1}^N)).
\end{align}

Now, we are ready to define a global attractor condition. 

\begin{definition} \label{def2}
\textbf{Uniform Global attractor.} An equilibrium point $({\mathbf v^{n}_{\text{opt}}})_{n=1}^{N}$ given by the optimal solution of \eqref{problem_eq_1}-\eqref{constraint_eq_1} is a uniform global attractor of $\Psi_{\pi, t \geq 0} (\cdot)$, i.e., for all $\epsilon > 0$, there exists $T(\epsilon)$ such that for all $t \geq T(\epsilon)$ and for all $({\mathbf v^{n}_{\text{opt}}})_{n=1}^{N})$, one has $||\Psi_{\pi, t} (({\mathbf v^{n}})_{n=1}^{N})-({\mathbf v^{n}_{\text{opt}}})_{n=1}^{N}))||_1 \leq \epsilon$.
\end{definition}

\begin{theorem} \label{optimality}
Under the uniform global attractor condition in Definition \ref{def2}, the policy \emph{$\pi_{\text{gain}}$} is asymptotically optimal. 
\end{theorem}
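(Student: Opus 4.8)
The plan is to sandwich the normalized (per-bandit) cost of $\pi_{\text{gain}}$ between a relaxation lower bound and a mean-field upper bound, and to show that both meet at the relaxed optimum as $r \to \infty$. First I would establish the lower bound. Relaxing the per-slot hard constraint \eqref{constraint_dummy} to the time-average constraint \eqref{relaxed_cons_2} only enlarges the feasible policy set, so the optimal cost of the $r$-scaled constrained problem satisfies $\mathsf{L}^r_{\text{opt}} \geq r\,\mathsf{L}_{\text{opt}}$, where the right-hand side is the optimal value of the relaxed problem \eqref{relaxed_problem}-\eqref{relaxed_cons} replicated across $r$ copies of each class. By weak duality and the Lagrangian decomposition into the per-bandit subproblems \eqref{per_arm_problem} and the dummy subproblem \eqref{per_arm_problem_dummy}, this relaxed optimum is attained at the optimal price $\lambda^*$ of the dual problem \eqref{dual_problem} and decomposes additively over the $N+1$ classes.

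Next I would characterize the relaxed optimum as an equilibrium of the state-distribution dynamics. The stationary occupancy measure $(\mathbf{v}^n)_{n=1}^N$ induced by any randomized stationary policy is feasible for a linear program whose objective is the time-average sum of the conditional entropies $H_L(Y_{n,t}\mid \Delta_n(t), X_{n,t-\Delta_n(t)})$ and whose single coupling constraint enforces \eqref{relaxed_cons_2}; let $(\mathbf{v}^n_{\text{opt}})_{n=1}^N$ denote its minimizer. I would then verify that $(\mathbf{v}^n_{\text{opt}})_{n=1}^N$ is a fixed point of the mean-field map $\Psi_{\pi_{\text{gain}}}$. At the price $\lambda^*$, the gain index in Definition \ref{gainindex} ranks the bandits so that activating those with $\alpha_n(\delta,x) > 0$ reproduces exactly the activation frequencies prescribed by the optimal occupancy measure, so the LP equilibrium is invariant under one step of the MGF dynamics.

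With the equilibrium identified, I would invoke the uniform global attractor condition of Definition \ref{def2}: for every $\epsilon > 0$ there is $T(\epsilon)$ so that $\|\Psi_{\pi_{\text{gain}}, t}((\mathbf{v}^n)_{n=1}^N) - (\mathbf{v}^n_{\text{opt}})_{n=1}^N\|_1 \leq \epsilon$ for all $t \geq T(\epsilon)$ and all starting points. Combining this with a law-of-large-numbers argument --- as $r \to \infty$ the empirical state distribution $\mathbf{V}^n(t)$ of the finite stochastic system concentrates around its mean-field trajectory $\Psi_{\pi_{\text{gain}}, t}$ --- the time-average empirical occupancy converges to $(\mathbf{v}^n_{\text{opt}})_{n=1}^N$. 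Since both the per-slot cost and the activation count are bounded linear functionals of the occupancy, the normalized cost $\mathsf{L}^r_{\text{gain}}/r$ converges to the cost evaluated at $(\mathbf{v}^n_{\text{opt}})_{n=1}^N$, which equals the relaxed optimum $\mathsf{L}_{\text{opt}}$. This matches the lower bound, giving $\lim_{r \to \infty} \mathsf{L}^r_{\text{gain}}/r = \lim_{r \to \infty} \mathsf{L}^r_{\text{opt}}/r$, i.e. asymptotic optimality in the sense of Definition \ref{defAsymptotically optimal}.

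The hardest part will be the passage from the finite stochastic system to the deterministic mean-field limit together with the interchange of the $t \to \infty$ and $r \to \infty$ limits. Concretely, I expect two technical obstacles: (i) controlling the vanishing $O(1/\sqrt{r})$ fluctuations of $\mathbf{V}^n(t)$ around the fluid path uniformly over the truncated AoI space $\{1,\dots,\delta_{\mathrm{bound}}\}$, which is justified because $q_n(\delta,x)$ stabilizes as $\delta$ grows so that $\delta_{\mathrm{bound}}$ can be taken large with negligible error; and (ii) showing that the hard constraint \eqref{constraint_dummy} is met with vanishing violation under $\pi_{\text{gain}}$, so that the relaxed bound is actually achievable. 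The latter is exactly where the dummy bandits absorb the slack and where the global attractor assumption does the heavy lifting, ruling out the persistent oscillations around the equilibrium that it is designed to exclude.
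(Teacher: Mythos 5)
Your overall strategy---a relaxation lower bound plus a mean-field/global-attractor upper bound that meet at the relaxed optimum---is sound and is in fact the machinery underlying the result, but it is not how the paper itself argues. The paper's proof is a reduction: it partitions the states of each class into $\mathcal{S}_{+}^n$, $\mathcal{S}_{0}^n$, $\mathcal{S}_{-}^n$ according to whether the optimal Lagrangian state-action frequencies $\bar u^{n,1}_{\delta,x}$, $\bar u^{n,0}_{\delta,x}$ are positive or zero, invokes the known result (Definition~\ref{def_4}, Lemma~\ref{lemma_3}, citing \cite{gast2023linear, verloop2016asymptotically}) that any LP-based priority policy respecting the ordering $\mathcal{S}_{+} \succ \mathcal{S}_{0} \succ \mathcal{S}_{-}$ is asymptotically optimal under the uniform global attractor condition, and then only verifies that the sign of the gain index $\alpha_n(\delta,x)$ matches this partition (positive on $\mathcal{S}_{+}$, zero on $\mathcal{S}_{0}$, negative on $\mathcal{S}_{-}$), so that $\pi_{\text{gain}}$ lies in $\Pi_{\text{LP-Priority}}$. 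What the paper's route buys is brevity and a clean separation between the generic mean-field convergence theorem and the problem-specific index verification; what your route buys is a self-contained argument that makes explicit where the dummy bandits and the attractor assumption enter.

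The one place where your sketch is genuinely softer than the paper is the fixed-point step: you assert that ``activating those with $\alpha_n(\delta,x)>0$ reproduces exactly the activation frequencies prescribed by the optimal occupancy measure.'' This glosses over the degenerate states in $\mathcal{S}_{0}^n$, where the gain is exactly zero and the LP optimum requires \emph{fractional} activation to saturate the budget; a deterministic highest-gain rule does not by itself realize those fractions, and this is precisely why the literature works with the whole class of priority policies (with the dummy class absorbing the slack) rather than with a single fixed point of the deterministic map. If you fill in that step---either by the three-tier priority argument the paper uses or by a careful treatment of ties---your proof goes through.
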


\begin{proof}
See Appendix \ref{proof_optimality}.
\end{proof}

Unlike the Whittle Index policy, we do not need to establish any indexability condition in the MGF policy. However, this is still asymptotically optimal.


\begin{figure*}[t]
\vspace{-0.3cm}
\centering
\includegraphics[width=12cm]{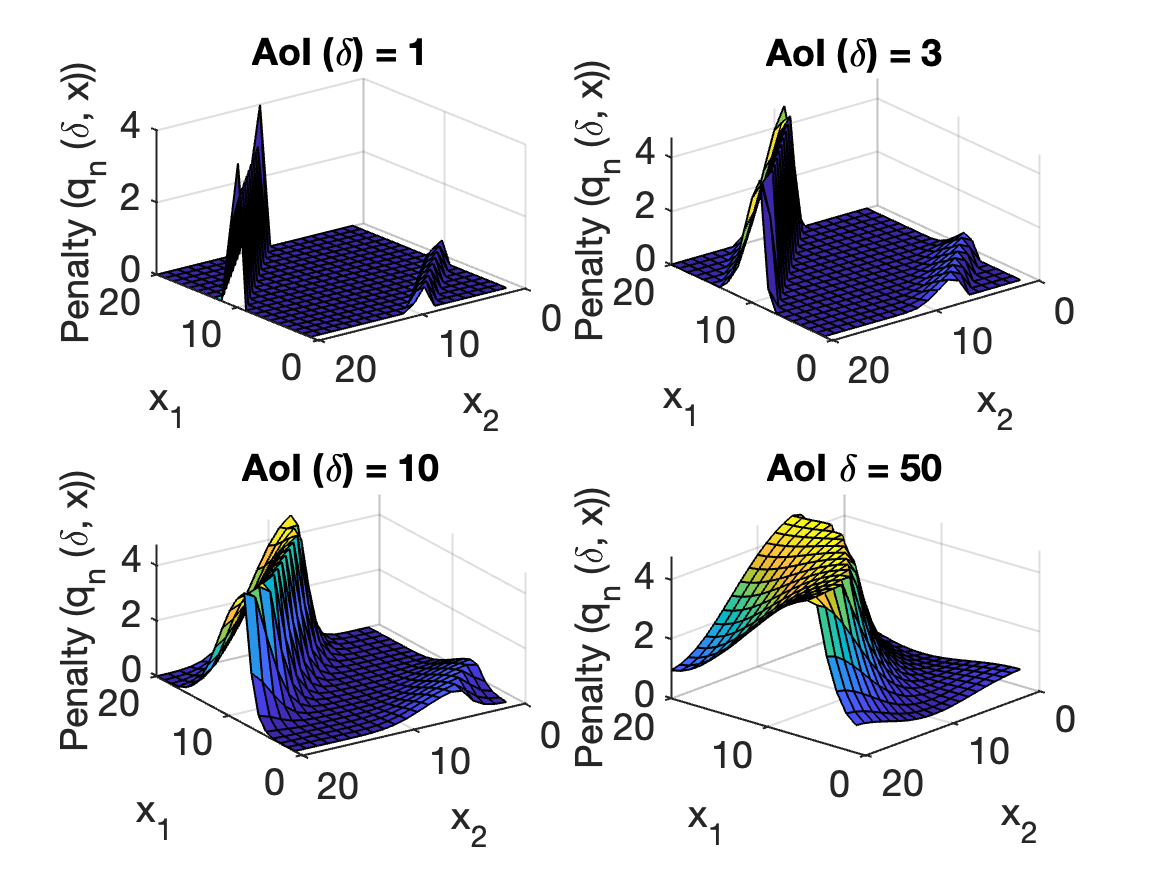}  
\vspace*{2.0mm} 
\caption{Penalty $q_n (\delta, x)$ vs received observation $x$ for fixed AoI $\delta$.}\vspace{-0.0cm}
\label{figure3}
\end{figure*}

\section{Discussions} \label{discussions}

The solution of problem \eqref{problem}-\eqref{constraint} yields an interesting relationship between an agent's proximity to safety boundaries and the required update frequency:
{\bf frequent update near the safety boundaries is optimal}. Specifically, analyzing the penalty function $q_n (\delta, x)$ for a given $\delta$ demonstrates that at the boundary of each safety region, the scheduler should update frequently and at far from the boundary, the update can be less frequent.
The details are provided below:

}

%

{ 

To understand this characteristic, we do the following experiment: consider a safety-critical system where $N$ agents (e.g., cars) are moving in a region illustrated in Figure \ref{figure1}. This region is equally divided into 400 positions and the received observation $X_{n, t}$ of agent $n$ is represented by the position $x_n = (x_{n,1}, x_{n,2})$ at time $t$. The safety level $Y_{n, t}$ is divided into three regions: \emph{\{safe, cautious, dangerous\}}. An agent $n$ can randomly move in any of the four directions: \emph{up, down, left}, and \emph{right} with equal probability 0.2. If agent $n$ is in the leftmost position, then moving left means it will stay in the same position, similar criteria are applied for the rightmost, upmost, and downmost positions. The losses considered in this experiment are the same as in Example \ref{lossexample}: $ L$\emph{(dangerous, safe)} $=1000, L$\emph{(safe, dangerous)} $=5,
L$\emph{(cautious, safe)} $= 10, L$\emph{(safe, cautious)} $= 1, 1,L$\emph{(dangerous, cautious)} $=100$, $L$\emph{(cautious, dangerous)}$= 5$, and $L$\emph{(dangerous, dangerous)} $=L$\emph{(cautious, cautious)} $=L$\emph{(safe, safe)} $=0$.

\begin{figure}
\vspace{-0.3cm}
\centering
\includegraphics[width=9cm]{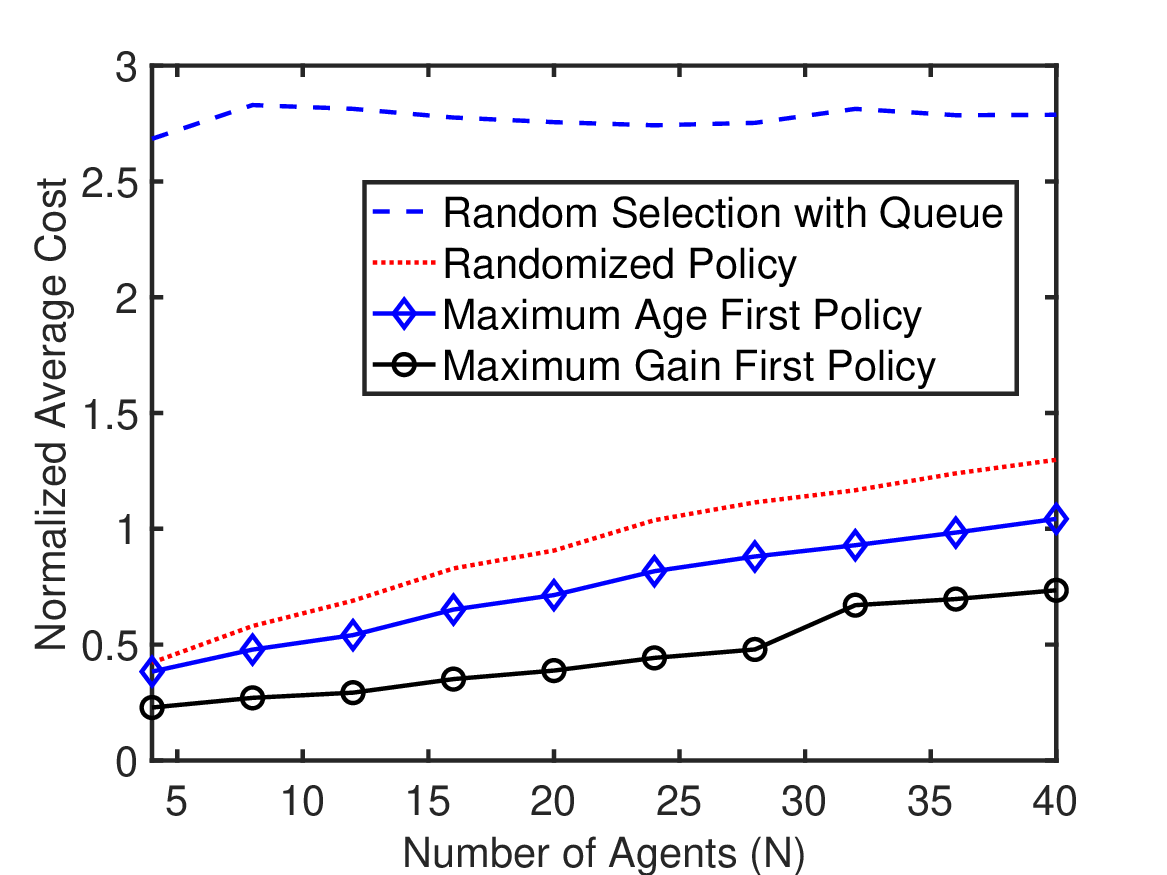}  
\vspace*{2.0mm} 
\caption{Normalized average penalty vs Number of Agents ($N$) where Number of channel is $M=1$ with success probability $0.95$.}\vspace{-0.0cm}
\label{figuresource}
\end{figure}

Figure \ref{figure3} demonstrates the penalty ($q_n (\delta, x)$) vs received observation ($x$) graph for different AoI ($\delta$) values. In this figure, when AoI is small, i.e., $\delta=1$, the penalty is high only at the two boundary regions which implies that we need to update frequently if $x$ is at any of the boundaries. Because near the boundaries, the uncertainty about the position of the agents is high, as a result, the scheduler has less situational awareness about the position of the agents. If $x$ is far from the boundary, then the situational awareness is good, and less frequent updating does not harm the system performance. With increasing $\delta$, the penalty graph spreads to the adjacent regions of the boundaries. Hence, the region that requires frequent updating is also increasing with increasing $\delta$. This intuition is crucial for designing an efficient status updating policy that can maximize situational awareness and eventually minimize the loss due to the unawareness of potential danger.

}

\ignore{In this section, we demonstrate that the ``Maximum Gain First Policy" in Algorithm \ref{algo1} is asymptotically optimal in the same asymptotic regime as the Whittle index policy \cite{whittle_restless}. In this scenario, all $N$ bandits are generalized to $N$ classes, and the
number of bandits in each class and the number of channels $M$ are scaled by a parameter $\gamma$, while maintaining a constant ratio between them. {\blue Two bandits are said to be in the same class if they have identical penalty functions and transition probabilities. The dummy bandits belong to the same class. None of the $N$ agents have the same penalty functions and transition probabilities. Therefore, we have $N+1$ distinct class of bandits. }

{\blue 
Let $V_{\pi_{\text{gain}}} ^{\gamma}$ be the expected long-term average cost under policy $\pi_{\text{gain}}$. 
The policy $\pi_{\text{gain}}$ will be asymptotically optimal if $V_{\pi_{\text{gain}}} ^{\gamma} \leq V_{\pi} ^{\gamma}$ for all $\pi \in \Pi$ as $\gamma$ approaches $\infty$, while maintaining a constant ratio $\gamma N/ \gamma M$. To prove the asymptotic optimality, (i) we first introduce a linear program in Section \ref{LP} for solving the relaxed optimization problem \eqref{relaxed_problem}-\eqref{relaxed_cons_2}, (ii) next, by using the solution to the LP, we define a uniform global attractor in Section \ref{UGA}. In the relaxed problem \eqref{relaxed_problem}-\eqref{relaxed_cons}, we have $N+M$ bandits with $N$ agents and $M$ dummy bandits, and $M$ channels. We assume that in the state $(\delta, x)$, large AoI values, i.e., $\delta > \delta_{\text{high}}$ are rarely visited if $\delta_{\text{high}}$ is sufficiently large.

Let $W_{\delta, x} ^n (t)$ be the fraction of class-$n$ bandits in state $(\delta, x)$ at time $t$ and $U_{\delta, x} ^{n, \mu} (t)$ be the fraction of class-$n$ bandits in state $(\delta, x)$ at time $t$ for which decision $\mu \in \{0, 1\}$ is taken. If $\mu = 0$, no agent is scheduled for transmission; otherwise, if $\mu=1$, an agent is scheduled for transmission. Given state $(\delta, x)$ and action $\mu$ of a class-$n$ bandit, let $P_{(\delta', x'),(\delta, x)} ^{(n, \mu)}$ be the transition probability from state $(\delta, x)$ to a state $(\delta', x')$ for a class-$n$ bandit under action $\mu$. Define
\begin{align}
w_{\delta, x} ^n =& \limsup_{T \to \infty} \sum_{t=0}^{T-1} \frac{1}{T} \mathbb{E} [W_{\delta, x} ^n (t)], \label{w}\\
u_{\delta, x} ^{n, \mu} =& \limsup_{T \to \infty} \sum_{t=0}^{T-1} \frac{1}{T} \mathbb{E} [U_{\delta, x} ^{n, \mu} (t)]. \label{u}
\end{align} 
If $\mu=1$, a channel is occupied by a bandit. In this scenario, the time-average expected fraction of bandits from class-$n$ occupying a channel is determined by
\begin{align}
\sum_{(\delta, x)} u_{\delta, x} ^{n, 1}.
\end{align}
Let $\mathbf{V}^m (t)$, $\mathbf{U}^m (t)$, $\mathbf{v}^m$, and $\mathbf{u}^m$ be the vectors that contain $V_{\delta, x} ^n (t)$, $U_{\delta, x} ^{n, \mu} (t)$, $v_{\delta, x} ^n$, and $u_{\delta, x} ^{n, \mu}$, respectively, for all $\delta$, $x$, and $\mu$.

\subsection{Linear Program for solving \eqref{relaxed_problem}-\eqref{relaxed_cons}:} \label{LP} 
By utilizing $u_{\delta, x} ^{n, \mu}$ in \eqref{u}, the on average constraint in \eqref{relaxed_cons} can be written as
\begin{align}
\sum_{n=0}^N \sum_{\mu =1} u_{\delta, x} ^{n, \mu} = N.
\end{align}
Let $\bar q_{\text{rel}}$ be the optimal objective value of the relaxed problem \eqref{relaxed_problem}-\eqref{relaxed_cons_2}. By solving the following LP, we can obtain $\bar q_{\text{rel}}$:
\begin{align}
& \min_{(\mathbf{u}^n)_{n=0}^{N}} \sum_{n=1}^N \sum_{\delta, x, \mu} \bar q_n (\delta, x) u_{\delta, x} ^{n, \mu} \label{LP_problem}\\
& ~~~\text{s.t.} \sum_{n=0}^N \sum_{\mu =1} u_{\delta, x} ^{n, \mu} = N, \\
& ~~~~~~\sum_\mu u_{\delta, x} ^{n, \mu} = \sum_{\delta', x', \mu} u_{\delta', x'} ^{n, \mu} P_{(\delta, x),(\delta', x')} ^{(n, \mu)}, \forall n, \delta, x, \\
& ~~~~~~\sum_{\delta, x, \mu} u_{\delta, x} ^{n, \mu} =1, \forall n, \\
& ~~~~~~~0 \leq \mathbf{u}^n \leq 1, \forall n. \label{LP_constraint}
\end{align}

\subsection{Uniform Global Attractor Condition} \label{UGA}
For a policy $\pi$, we can have the following mapping 
\begin{align}
& \Psi_{\pi} ((\mathbf{w}^{n})_{n=1}^N)=\nonumber\\
& \mathbb{E}_{\pi} [(\mathbf{W}^n (t+1))_{n=1}^N | (\mathbf{W}^n (t))_{n=1}^N  = (\mathbf{w}^n)_{n=1}^N].
\end{align}
We define the $t$-th iteration of the maps $\Psi_{\pi, t \geq 0} (\cdot)$ as follows
\begin{align}
& \Psi_{\pi, 0} ((\mathbf{w}^n)_{n=1}^N) = (\mathbf{w}^n)_{n=1}^N, \\
& \Psi_{\pi, t+1} ((\mathbf{w}^n)_{n=1}^N) = \Psi_{\pi}(\Psi_{\pi, t}((\mathbf{w}^n)_{n=1}^N)).
\end{align}

\begin{definition} \label{def2}
\textbf{Uniform Global attractor.} An equilibrium point $({w^{n^*}})_{n=1}^{N}$ given by the optimal solution of \eqref{LP_problem}-\eqref{LP_constraint} is a uniform global attractor of $\Psi_{\pi, t \geq 0} (\cdot)$, i.e., for all $\epsilon > 0$, there exists $T(\epsilon)$ such that for all $t \geq T(\epsilon)$ and for all $({w^{n^*}})_{n=1}^{N})$, one has $||\Psi_{\pi, t} ((({w^{n}})_{n=1}^{N}))-({w^{n^*}})_{n=1}^{N}))||_1 \leq \epsilon$.
\end{definition}

\begin{theorem} \label{optimality}
Under Definition \ref{def2}, the policy \emph{$\pi_{\text{gain}}$} is asymptotically optimal. 
\end{theorem}

\begin{proof}
See Appendix \ref{proof_optimality}.
\end{proof}}}

\section{Numerical Results} \label{simulation}


In this section, we evaluate the performance of the following policies:

\begin{itemize}

\item Randomized Policy: This policy randomly selects $M$ agents. To select $M$ agents out of $N$ agents, we use a built-in MATLAB function called ``randperm($N, M$)". 

\item Random Selection with Queue: The agents generate updates at every time slot and store in a FIFO queue. In our simulation, we consider the queue of each agent can store $1000$ updates. When the queue buffer is full, the oldest packet is dropped. At every time slot, $M$ agents are selected for transmission using the ``randperm($N, M$)" function. Whenever an agent is selected, the oldest packet from the queue of the agent is sent to the receiver.

The difference between Randomized policy and Random Selection with Queue is: In Randomized policy, whenever an agent is selected for transmission, a fresh update with AoI=$0$ is sent; In Random Selection with Queue, updates are generated at every time-slots and stored in a queue. When an agent is selected for transmission, the oldest packet from the queue is sent. 

\item Maximum Age First (MAF) Policy: This policy selects $M$ agents with the highest AoI.

\item Maximum Gain First (MGF) Policy: The policy provided in Algorithm \ref{algo1}.
 
\end{itemize}

We consider a safety-critical system with $N$ agents navigating within a region. This region is uniformly divided into a $20\time 20$ grid (20 rows and 20 columns). The observation $X_{n,t}$ for agent $n$ at time $t$ is represented by its position $x_n = (x_{n,1}, x_{n,2})$. 
The safety level $Y_{n,t}$ is categorized into three regions: safe, cautious, and dangerous. Rows 1 through 6 are designated as safe, rows 7 through 13 as cautious, and the remaining rows as dangerous. This setup differs from the one depicted in Figure \ref{figure3}, where all $400$ positions represent distinct states, leading to a state space of size $\tau \times 400$, where $\tau$ is the truncated AoI value. This large state space significantly increases the complexity of determining the value function and gain index. To mitigate this, we simplify the representation by considering only the row information for safety assessment. Consequently, the total number of states is reduced from $\tau \times 400$ to $\tau \times 20$. 

\begin{figure}
\vspace{-0.3cm}
\centering
\includegraphics[width=9cm]{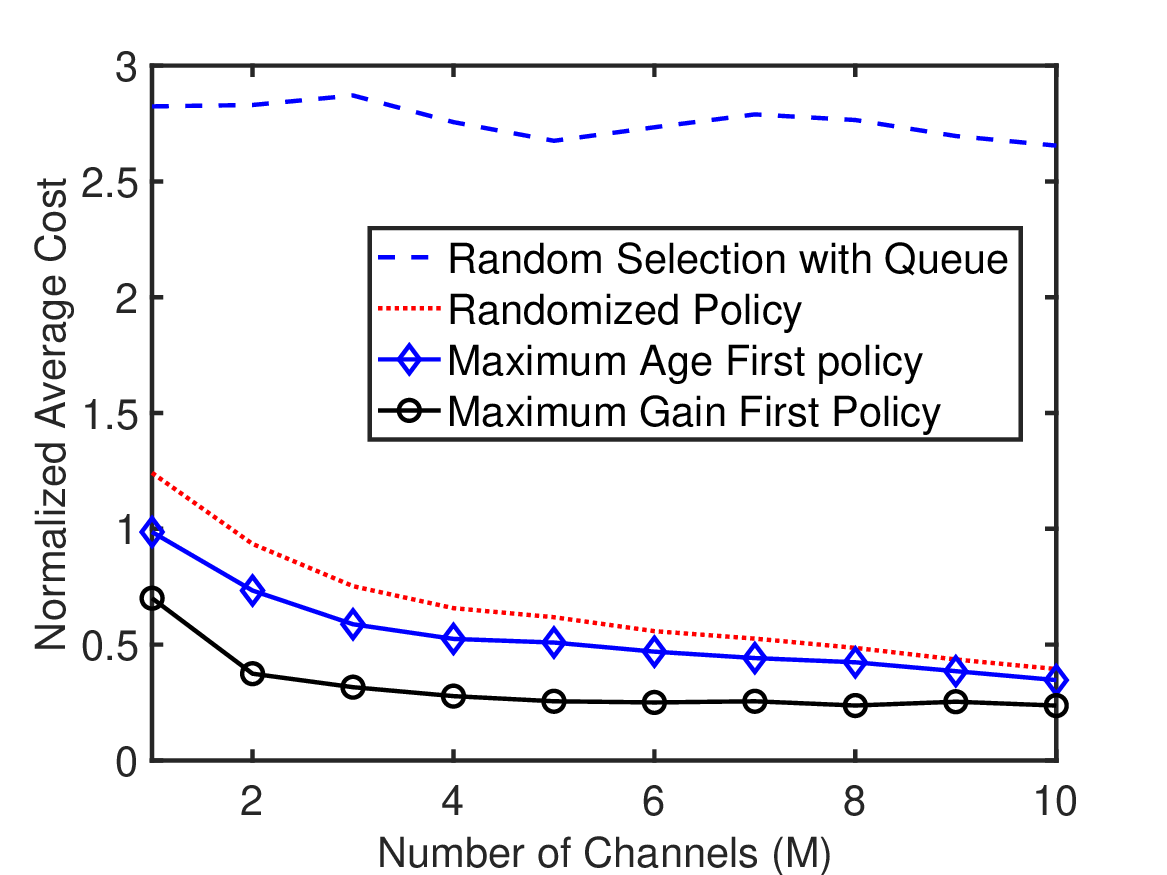}  
\vspace*{2.0mm} 
\caption{Normalized average penalty vs Number of channels ($M$) where Number of agents are $N=20$ with success probability $0.95$.}\vspace{-0.0cm}
\label{figurechannel}
\end{figure}

In this simulation, each agent $n$ can move randomly in any of the 
four directions: \emph{up, down, left}, and \emph{right}. We consider half of the agents move in \emph{up} with probability $0.3$, \emph{down} with probability $0.3$, \emph{left} with probability $0.2$, and \emph{right} with probability $0.2$. The rest of the agents move in \emph{up} with probability $0.05$, \emph{down} with probability $0.05$, \emph{left} with probability $0.45$, and \emph{right} with probability $0.45$. If an agent reaches a boundary, it stays in its current position. The loss incurred by agent $n$ is given by  $L_n(y, \hat y)$
which is defined as follows: $L_n $\emph{(dangerous, safe)} $=1000, L_n $\emph{(safe, dangerous)} $=5,
L_n $\emph{(cautious, safe)} $= 10, L_n $\emph{(safe, cautious)} $= 1, 1,L_n $\emph{(dangerous, cautious)} $=100$, $L_n $\emph{(cautious, dangerous)}$= 5$, and $L_n $\emph{(dangerous, dangerous)} $=L_n $\emph{(cautious, cautious)} $=L_n$\emph{(safe, safe)} $=0$.
In our simulation, the success probability is $0.95$.

\begin{figure}
\vspace{-0.3cm}
\centering
\includegraphics[width=9cm]{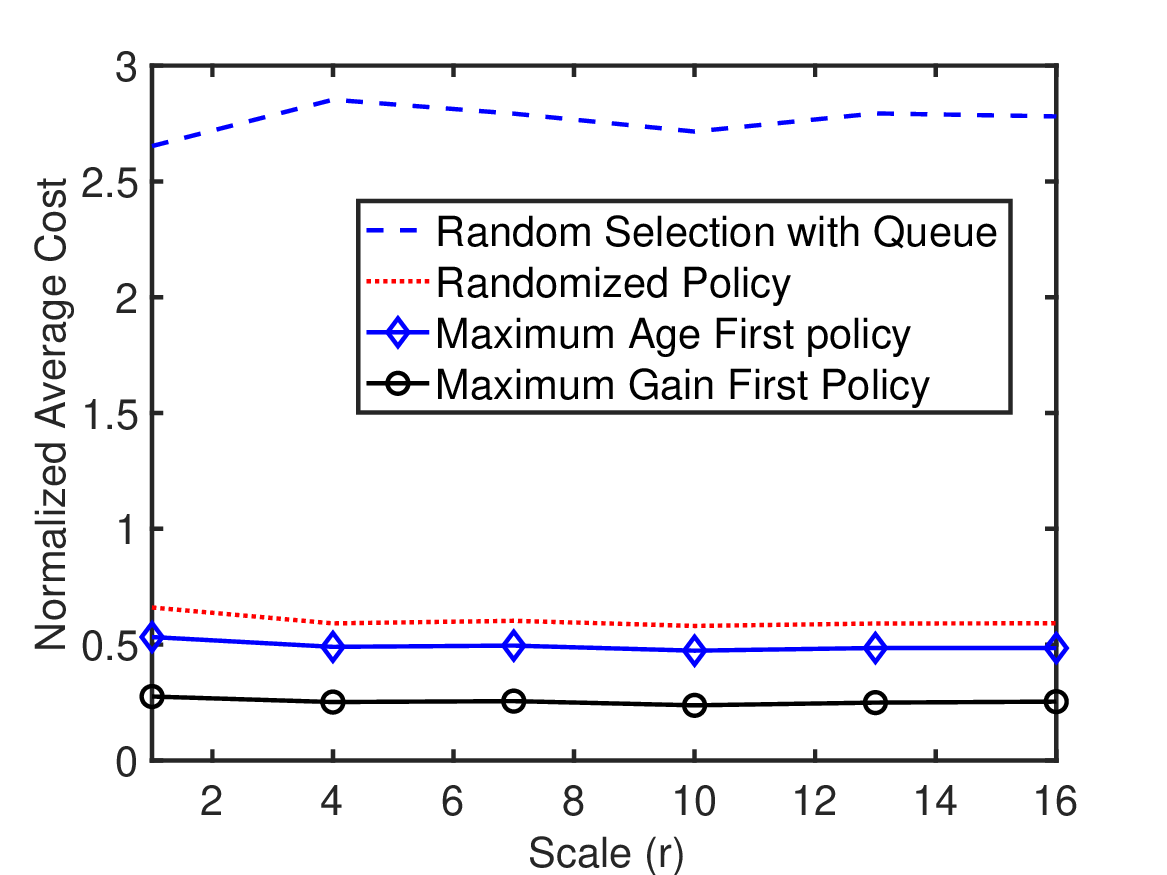} 
\vspace*{2.0mm} 
\caption{Normalized average penalty vs the scaling parameter ($r$) where the number of agents is $4r$ and the number of channels is $r$ with success probability $0.95$.}\vspace{-0.0cm}
\label{figureaysmp}
\vspace{-1em}
\end{figure}

Figure \ref{figuresource} illustrates the performance comparison of the four policies mentioned above as the number of agents increases. We consider two erasure channels in this simulation. The normalized average penalty (time-averaged penalty per agent) in Figure \ref{figuresource} is obtained by dividing time-average cost by the number of agents. From the figure, the MGF policy outperforms random selection with queue, randomized policy, and MAF policy. The performance of random selection with queue is always worse than the other three policies. This is because random selection with queue sends packets even when the previous packets has not yet been delivered, resulting in an extended waiting time in the queue and a significantly large AoI value at the receiver side. 
Furthermore, the randomized policy randomly selects two agents for sending updates and the MAF policy only utilizes the AoI in decision-making. In contrast, the MGF policy makes the decision in a smarter way by considering the gain index, which is a function of both the AoI and latest received observation. Hence, the MGF policy shows better performance than the other three policies. The performance gain of the MGF policy is up to $10.47$ times compared to random selection with queue, up to $2.33$ times compared to randomized policy, and up to $1.84$ times compared to MAF policy.

Figure \ref{figurechannel} illustrates the performance comparison of the four policies as the number of channels increases. We consider $20$ agents in this simulation. With the increase of the number of available channels for sending updates, the performance of the policies are getting better. However, because of the intelligent decision strategy by utilizing the AoI and the latest received observation, the MGF policy outperforms the other three policies. The performance gain of the MGF policy is up to $9.08$ times compared to random selection with queue, up to $2.5$ times compared to randomized policy, and up to $1.96$ times compared to MAF policy.


In Figure \ref{figureaysmp}, we consider the number of agents and the number of channels are $4r$ and $r$, respectively, where $r$ is a scaling parameter. The figure illustrates the normalized average cost as the scaling parameter $r$ increases. In this scenario, we observe that the MGF policy achieves the best performance compared to other policies under all simulated scaling parameters $r$.

Figure \ref{figuregain} demonstrates that the gain ($\alpha_n (\delta, x)$) vs the received observation ($x$) follows the same pattern what we observed from Figure \ref{figure3}. In this simulation, we consider $N=20$ and $M=10$. In Figure \ref{figure3}, we observe that frequent updating is required at the safety boundary regions and with the increase of AoI, this region requiring frequent updates expands. Similar pattern is observed from Figure \ref{figuregain} that illustrates that $\alpha_n (\delta, x)$ is high at the safety boundaries and we need frequent updating.


\begin{figure}
\centering
\includegraphics[width=9cm]{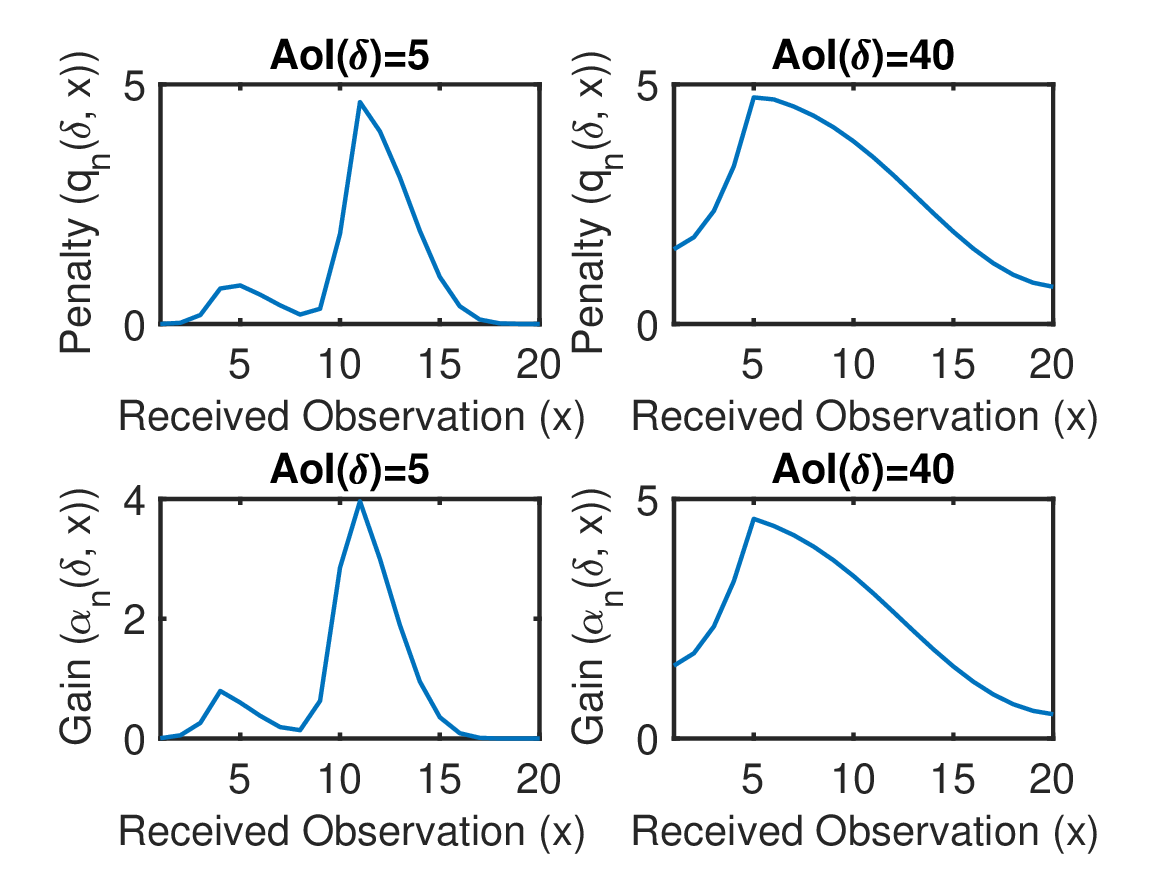}  
\vspace*{2.0mm} 
\caption{Penalty $q_n (\delta, x)$ vs received observation $x$ and Gain $\alpha_n (\delta, x)$ vs received observation $x$ for fixed AoI $\delta$. We consider number of agents $N=20$, number of channels $M=10$, success probability $0.95$, and $w_n =1$ for all agents.}\vspace{-0.0cm}
\label{figuregain}
\end{figure}

\section{conclusion}
In this study, we address the importance of situational awareness in safety-critical systems. We consider a pull-based status updating model and the formulated problem is an RMAB. The developed MGF policy requests updates more frequently when the received observation is near the safety boundaries. In addition, the MGF policy is not required to satisfy any indexability condition and is proven to be asymptotically optimal. In future we will study systems where agents dynamically enter and exit at any time. Another interesting direction is to consider a finite time horizon problem where there is a termination state while encountering a danger. 
\appendices
\section{Definitions of 0-1 loss, quadratic loss, logarithmic loss} \label{loss_definitions}

\textbf{0-1 loss:} The $0$-$1$ loss function assigns a loss of $0$ for incorrect estimation of a random variable $Y=y$, and $1$ otherwise. It is given by
and is given by an indicator function which equals to 1 if the true value $y$ and the estimated value $\hat y$ are equal to each other. 
$L_{0-1} (y, \hat y)$ is given by
\begin{align}
L_{0-1} (y, \hat y) = \mathbf{1}\{y \neq \hat y\},
\end{align}
where $\mathbf{1}\{y \neq \hat y\}$ is the indicator function for the event $\{y \neq \hat y\}$.

\textbf{Quadratic loss:} The quadratic loss function quantifies the error between the true value of a random variable $Y=y$ and and its estimated value $\hat y$ by calculating the square of their difference. It is given by
\begin{align}
L_2 (y, \hat y) = (y - \hat y)^2. 
\end{align}

\textbf{Logarithmic loss:} The log-loss function $L_{\log} (y, P_Y )$ is the negative log-likelihood of the true value $Y = y$ which is given by
\begin{align}
L_{\log} (y, P_Y ) = - \log P_Y (y)
\end{align}
where the action $a = P_Y$ is a distribution of $Y$.

\section{Proof of Theorem \ref{thm_sufficient_statistic}} \label{ptheorem1}

To prove (i), we need to show that $Y_{n, t}$ is conditionally independent of the history $\mathbf{H}_{n, t}$ given the latest received observation $X_{n, t-\Delta_n(t)}$ and its AoI $\Delta_n (t)$ at time $t$: 
\begin{align} \label{independence_proof}
\mathbb{P} (Y_{n, t} | \mathbf{H}_{n, t}) 
=& \mathbb{P} (Y_{n, t} | \big(X_{n,\tau-\Delta_n(\tau)}, \Delta_n (\tau), \mu_n (\tau-1), \gamma_n (\tau-1)\big)_{\tau=0}^{t}) \nonumber\\
=& \mathbb{P} (Y_{n, t}|(\Delta_n (t), X_{n, t-\Delta_n(t)})).
\end{align}
We will prove \eqref{independence_proof} in two steps: (1) First, we show that $Y_{n, t}$ is conditionally independent of the scheduling decisions $(\mu_n (\tau))_{\tau =0} ^ {t-1}$ and the delivery indicators $(\gamma_n (\tau))_{\tau =0} ^ {t-1}$ given the AoI history $(\Delta_n (\tau))_{\tau =0} ^ {t}$; (2) Next, we show that $Y_{n, t}$ is conditionally independent of the $(\Delta_n (\tau), X_{n, \tau - \Delta_n (\tau)})_{\tau =0} ^ {t-1}$  given the AoI $(\Delta_n (t), X_{n, t-\Delta_n (t)})$. 

The safety level $Y_{n, t}$ depends on $(X_{n,\tau-\Delta_n(\tau)}, \Delta_n (\tau))_{\tau=0}^{t})$ and to determine $(\Delta_n (\tau))_{\tau=0}^{t})$, we need to know the scheduling decisions $(\mu_n (\tau))_{\tau =0} ^ {t-1}$ and the delivery indicators $(\gamma_n (\tau))_{\tau =0} ^ {t-1}$, as defined in \eqref{age}. However, if the AoI history $(\Delta_n (\tau))_{\tau =0} ^ {t}$ is given, then the safety level $Y_{n, t}$ does not depend on the the scheduling decisions $(\mu_n (\tau))_{\tau =0} ^ {t-1}$ and the delivery indicators $(\gamma_n (\tau))_{\tau =0} ^ {t-1}$. Hence, $Y_{n, t}$ is conditionally independent of the scheduling decisions $(\mu_n (\tau))_{\tau =0} ^ {t-1}$ and the delivery indicators $(\gamma_n (\tau))_{\tau =0} ^ {t-1}$ given the AoI history $(\Delta_n (\tau))_{\tau =0} ^ {t}$.
Therefore, we can write
\begin{align} \label{independence_proof_1}
\mathbb{P} (Y_{n, t} | \mathbf{H}_{n, t})
=\mathbb{P} (Y_{n, t} | \big(X_{n,\tau-\Delta_n(\tau)}, \Delta_n (\tau)\big)_{\tau=0}^{t}).
\end{align}

Due to the Markov property of the underlying process $Y_{n, t} \leftrightarrow X_{n,t} \leftrightarrow X_{n, t-1}\leftrightarrow \ldots$, we have the following Markov chain given $\Delta_n (t)$: $Y_{n, t} \leftrightarrow X_{n,t-\Delta_n (t)} \leftrightarrow X_{n, (t-1)-\Delta_n (t-1)}\leftrightarrow \ldots$.
Hence, $Y_{n, t}$ is independent of $(\Delta_{n}(\tau), X_{n, \tau-\Delta_n(\tau)})_{\tau=0}^{t-1}$ given $\Delta_n (t)$ and $X_{n, t-\Delta_n (t)}$. Considering these facts, \eqref{independence_proof} follows from \eqref{independence_proof_1}. Therefore, we can write
\begin{align}\label{sufficientestimator}
\inf_{y \in  \mathcal Y_n} \mathbb{E} [L(Y_{n,t}, y)|\mathbf{H}_{n, t}]
=\inf_{y \in  \mathcal Y_n} \!\!\!\mathbb{E} [L(Y_{n,t}, y)|\Delta_n (t), X_{n, t-\Delta_n (t)}].
\end{align}
and because $\mathcal F_n$ is the set of all possible functions that maps any input from $(\Delta_n (t), X_{n, t-\Delta_n (t)})$ to $\mathcal Y_n$. 
This proves (i).

To prove (ii), first we show that

\begin{align} \label{exp}
    &\mathbb E\bigg[\sum_{n=1}^{N} L_n(Y_{n, t}, \phi_n (\mathbf H_{n,t}))-\mathsf{L}_{\text{opt}}\bigg|(\mathbf H_{n,t})_{n=1}^{N}\bigg]\nonumber\\
    &=\sum_{n=1}^{N} \mathbb E\bigg[ L_n(Y_{n, t}, \phi_n (\mathbf H_{n,t}))-\mathsf{L}_{\text{opt}}\bigg|\mathbf H_{n,t}\bigg],
\end{align}
which holds because given the scheduling decisions $(\mu_n (\tau))_{\tau=0}^{t-1}$, the delivery indicators $(\gamma_n(t))_{\tau=0}^{t-1}$ are independent across both agents and time slots and the statuses \{$X_{n, t}, t = 0,1,2,\ldots$\} and \{$X_{m, t}, t = 0,1,2,\ldots$\} are independent for all $n\neq m$. 

Now, by utilizing Theorem \ref{thm_sufficient_statistic} (i) in \eqref{exp}, we get
\begin{align} \label{exp_1}
    & \mathbb E\!\bigg[\! L_n(Y_{n, t}, \phi_n (\mathbf H_{n,t}))-\mathsf{L}_{\text{opt}}\bigg|\mathbf H_{n,t}\bigg] \nonumber\\
    =& \mathbb E\!\bigg[\! L_n(Y_{n, t}, f_n (\Delta_n (t), \!X_{n, t-\Delta_n (t)})\!-\!\mathsf{L}_{\text{opt}}\bigg|\Delta_n (t), \!X_{n, t-\Delta_n (t)}\!\bigg]\!.
\end{align}

Hence, the value function of the average cost MDP \eqref{problem}-\eqref{constraint} under any given policy $\pi=(\mu_n(0), \mu_n(1), \ldots)_{n=1}^{N}$ can be reduced to
\begin{align} \label{Jpi}
& \sum_{t=0}^{\infty} \sum_{n=1}^{N} \mathbb E\bigg[ L_n(Y_{n, t}, \phi_n (\mathbf H_{n,t}))-\mathsf{L}_{\text{opt}}\bigg|(\mathbf H_{n,t})_{n=1}^{N}\bigg] \nonumber\\
=& \sum_{t=0}^{\infty} \sum_{n=1}^{N} \!\mathbb E\!\bigg[L_n (Y_{n, t}, f_n (\Delta_n (t), X_{n, t-\Delta_n (t)})-\mathsf{L}_{\text{opt}}\bigg|\Delta_n (t), X_{n, t-\Delta_n (t)}\bigg],
\end{align}
where \eqref{Jpi} holds from \eqref{exp_1}.
Therefore, from \cite[Chapter 4.3]{bertsekas2011dynamic}, Theorem \ref{thm_sufficient_statistic} (ii) follows.

\section{Proof of Lemma \ref{lemmaoptimalestimator}} \label{proof_lemma_joint}

Given any policy $\pi \in \Pi$, problem \eqref{problem}-\eqref{constraint} can be written as
\begin{align}
\!\mathsf{L}_{\text{opt}}\! &=\! \inf_{\phi \in \Phi}\! \limsup_{T \to \infty}\! \sum_{n=1}^{N} \! \frac{1}{T} \!\sum_{t=0}^{T-1} \!\mathbb{E} \big[L_n(Y_{n, t}, \phi_n (\mathbf{H}_{n, t}))\big] \label{problem_proof}\\
& \quad \limsup_{T \to \infty}\! \sum_{n=1}^{N} \! \frac{1}{T} \!\sum_{t=0}^{T-1} \inf_{\phi \in \Phi}\! \mathbb{E} \big[L_n(Y_{n, t}, \phi_n (\mathbf{H}_{n, t}))\big]. \label{constraint_proof}
\end{align}

According to Theorem \ref{thm_sufficient_statistic} (i), $(\Delta_n (t), X_{n, t-\Delta_n (t)})$ is a sufficient statistic of $\mathbf{H}_{n, t}$ for estimating $Y_{n, t}$, the optimal estimation problem can be written as
\begin{align}
\inf_{f_n \in \mathcal Y_n} \mathbb{E} \big[\!L_n(Y_{n, t}, f_n (\Delta_n (t), X_{n, t-\Delta_n (t)})\!)\!\big],    
\end{align}
where $f_n$ is the set of all estimator functions which maps any $(\delta, x) \in \mathbb Z^{+} \times \mathcal X_n$ to $\mathcal Y_n$. Therefore, we can write
\begin{align}
 & \sum_{(\delta, x)} P_{\Delta_n(t),X_{n, t-\Delta_n (t)}}(\delta,x) \inf_{f_n (x, \delta) \in \mathcal{Y}_n} \!\!\!\!\!\mathbb{E}_{P_{Y_{n,t}}|\Delta_n (t) = \delta, X_{n, t-\delta} = x} [L_n (Y, f_n (\delta, x)] \\
 =& \sum_{(\delta, x)}  P_{\Delta_n(t),X_{n, t-\Delta_n (t)}}(\delta,x) \inf_{y_n \in \mathcal{Y}_n} \mathbb{E}_{P_{Y_{n,t}}|\Delta_n (t) = \delta, X_{n, t-\delta} = x} [L_n (Y, y_n)].
\end{align}
This completes the proof of Lemma \ref{lemmaoptimalestimator}.

\section{Proof of Lemma \ref{lemma_entropy}} \label{proof_lemma_entropy}

By definition  \eqref{L_cond_en}, we have 
\begin{align}
H_L (Y_{n, t} |\Delta_n(t), X_{n, t-\Delta_n(t)})
=&\sum_{(\delta, x) \in \mathbb Z \times \mathcal X_n} P_{\Delta_n(t),X_{n, t-\Delta_n (t)}}(\delta,x) \times \inf_{f_n(\delta, x) \in \mathcal Y_n} \mathbb{E}_{Y \sim P_{Y_{n, t}|\Delta_n (t) = \delta, X_{n, t-\delta}=x}} \big[L_n(Y, f_n (\delta, x))\big]\nonumber\\
=&\sum_{(\delta, x) \in \mathbb Z \times \mathcal X_n} P_{\Delta_n(t),X_{n, t-\delta}}(\delta,x) \inf_{f_n(\delta, x) \in \mathcal Y_n} \mathbb{E}_{Y \sim P_{Y_{n, t}|\Delta_n (t) = \delta, X_{n, t-\delta}=x}} \big[L_n(Y, f_n (\delta, x))\big].
\end{align}

\section{Proof of Theorem \ref{optimality}} \label{proof_optimality}

{ We first present a set of LP-based priority policies that achieve asymptotically optimality under the uniform global attractor condition in Definition \ref{def2}. Subsequently, we demonstrate that $\pi_{\text{gain}}$ belongs to this set of priority policies. Let $U_{\delta, x}^{n, \mu}(t)$ be the number of class $n$ bandits in state ${\delta, x}$  taking action $\mu$. We define $u^{n, \mu}_{\delta, x}$ as the expected number of class $n$ bandits in state ${\delta, x}$  taking action $\mu$, given by
\begin{align}
 u^{n, \mu}_{\delta, x}=\limsup_{T\rightarrow \infty}\sum_{t=0}^{T-1} \frac{1}{T} \mathbb E [ U^{n, \mu}_{\delta, x}(t)].
\end{align}
Let $\bar u^{n, \mu}_{\delta, x}$ be the optimal state action frequency of the Lagrangian problem \eqref{decomposed} in which $\lambda=\lambda^*$ is the optimal dual Lagrangian multiplier. 

Define the following sets
\begin{align}
\mathcal{S}_{+} ^n = \{(\delta, x) : \bar u_{\delta, x} ^{n, 1} > 0 {\thinspace}~ {\text{and}} ~{\thinspace} \bar u_{\delta, x} ^{n, 0} = 0\}, \\
\mathcal{S}_{0} ^n = \{(\delta, x) : \bar u_{\delta, x} ^{n, 1} > 0 {\thinspace}~ {\text{and}} ~{\thinspace} \bar u_{\delta, x} ^{n, 0} > 0\}, \\
\mathcal{S}_{-} ^n = \{(\delta, x) : \bar u_{\delta, x} ^{n, 1} = 0 {\thinspace}~ {\text{and}} ~{\thinspace} \bar u_{\delta, x} ^{n, 0} \geq 0\}.
\end{align}

\begin{definition} \label{def_4}
\textbf{LP-based Priority Policies}. \cite{verloop2016asymptotically}, \cite{gast2023linear} We define a set $\Pi_{\text{LP-Priority}}$ that consists of priority policies that satisfy the following conditions:

i) A class-$k$ bandit in state $(\delta_k, x_k) \in \mathcal{S}_{+} ^k$ is given higher priority than a class-$j$ bandit in state $(\delta_j, x_j) \in \mathcal{S}_{0} ^j$.

ii)  A class-$k$ bandit in state $(\delta_k, x_k) \in \mathcal{S}_{0} ^k$ is given higher priority than a class-$j$ bandit in state $(\delta_j, x_j) \in \mathcal{S}_{-} ^j$.

iii) If a class-$k$ bandit is activated, then $\mu=1$ is chosen; otherwise, $\mu=0$ is chosen.
\end{definition}

\begin{lemma} \label{lemma_3}
For any policy $\pi \in \Pi_{\text{LP-Priority}}$, if the uniform global attractor condition in Definition \ref{def2} is satisfied, then the policy is asymptotically optimal. 
\end{lemma}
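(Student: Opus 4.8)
The plan is to prove Lemma~\ref{lemma_3} by sandwiching the cost of an arbitrary $\pi \in \Pi_{\text{LP-Priority}}$ between a matching lower and upper bound in the scaling parameter $r$: a \emph{converse} showing no feasible policy can beat the relaxed value, and an \emph{achievability} argument showing the priority policy attains that value as $r\to\infty$. First I would establish the lower bound. The relaxed problem \eqref{relaxed_problem}--\eqref{relaxed_cons} enlarges the feasible set of \eqref{problem_dummy}--\eqref{constraint_dummy_2}, since it only requires the channel budget to hold on time-average rather than in every slot; hence its optimum lower-bounds $\mathsf{L}^r_{\text{opt}}$ for every $r$. By the Lagrangian decomposition \eqref{decomposed}--\eqref{per_arm_problem_dummy} and strong duality for the relaxed average-cost problem at the optimal multiplier $\lambda^*$, this relaxed optimum is separable across the $N+1$ classes and scales linearly in $r$; equivalently it equals $r$ times the value $\sum_n\sum_{\delta,x,\mu} q_n(\delta,x)\,\bar u^{n,\mu}_{\delta,x}$ attained by the optimal state-action frequencies $\bar u^{n,\mu}_{\delta,x}$ that define $\mathcal S_+^n,\mathcal S_0^n,\mathcal S_-^n$, with $q_n$ as in \eqref{loss_ex}.

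For achievability I would pass to the mean-field (fluid) limit in $r$. Fix $\pi \in \Pi_{\text{LP-Priority}}$ and track the occupancy vectors $(\mathbf V^n(t))_{n=1}^N$ of the $r$-scaled system. Because the bandits are exchangeable within each class and the priority rule of Definition~\ref{def_4} depends on the population only through these occupancy fractions, a standard propagation-of-chaos / law-of-large-numbers argument gives that, over any fixed finite horizon, $(\mathbf V^n(t))_{n=1}^N$ converges in probability as $r\to\infty$ to the deterministic iterate $\Psi_{\pi,t}((\mathbf v^n(0))_{n=1}^N)$. The uniform global attractor condition in Definition~\ref{def2} then forces $\Psi_{\pi,t}\to(\mathbf v^n_{\text{opt}})_{n=1}^N$ uniformly in the initial point, so the fluid trajectory is absorbed into the LP optimizer. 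At this equilibrium the priority structure exactly reproduces the LP activation pattern: states in $\mathcal S_+^n$ take action $\mu=1$, states in $\mathcal S_-^n$ take $\mu=0$, and the partially served states in $\mathcal S_0^n$ absorb the residual channel budget, so the induced state-action frequencies coincide with $\bar u^{n,\mu}_{\delta,x}$. Since $q_n(\delta,x)$ is bounded and the cost is a linear functional of the occupancy, the normalized average cost $\mathsf{L}^r_{\text{gain}}/r$ converges to the relaxed value $\sum_n\sum_{\delta,x,\mu} q_n(\delta,x)\,\bar u^{n,\mu}_{\delta,x}$. Combining with the converse yields $\lim_{r\to\infty}(\mathsf{L}^r_{\text{gain}}-\mathsf{L}^r_{\text{opt}})/r=0$, the claimed asymptotic optimality.

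The main obstacle is the interchange of the two limits $r\to\infty$ and $T\to\infty$ concealed in the $\limsup_{T}$ that defines the average cost: the mean-field convergence above is uniform only on \emph{finite} horizons, whereas the objective is an infinite-horizon time average. The \emph{uniform} (as opposed to merely pointwise) global attractor condition is precisely what closes this gap, since it provides a horizon $T(\epsilon)$ independent of the initial occupancy after which the fluid state is $\epsilon$-close to $(\mathbf v^n_{\text{opt}})_{n=1}^N$; the finite-horizon transient then contributes negligibly to the time average. I would combine this with the AoI truncation at $\delta_{\mathrm{bound}}$ (justified because $q_n(\delta,x)$ saturates for large $\delta$) and a concentration bound controlling the $O(1/r)$ occupancy fluctuations, together with the observation that the dummy-bandit class soaks up any instantaneous slack in the hard budget \eqref{constraint_dummy} so that its violation is $o(r)$. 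Verifying that these fluctuations and the transient vanish in the normalized cost is the delicate part; the remaining bookkeeping follows the template of \cite{verloop2016asymptotically, gast2023linear}.
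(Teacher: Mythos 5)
Your argument is correct and follows essentially the same route as the paper: the paper proves this lemma simply by converting the inequality budget into an equality constraint via the dummy bandits and then invoking Theorem 13 of \cite{gast2023linear}, which is precisely the relaxation lower bound plus mean-field achievability argument you reconstruct. The only difference is that you unpack the internals of the cited theorem (finite-horizon occupancy convergence, and the uniform global attractor condition used to interchange the $r\to\infty$ and $T\to\infty$ limits), whereas the paper delegates these steps to the reference.
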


By utilizing \cite[Theorem 13]{gast2023linear}, we can obtain Lemma \ref{lemma_3}. The results in \cite{gast2023linear} hold for RMAB problem with equality constraint. Our problem \eqref{problem}-\eqref{constraint} considers an inequality constraint \eqref{constraint}. In order to use the results from  \cite{gast2023linear}, we introduce dummy bandits and obtain an equivalent problem \eqref{problem_dummy}-\eqref{constraint_dummy_2} with equality constraint \eqref{constraint_dummy}. 

Following \cite[Proposition 14]{gast2023linear}, for class-$k$ bandits

1. For any class-$k$ bandit, $(\delta_k, x_k) \in \mathcal{S}_{+}$ implies
\begin{align} \label{priority_1}
\alpha_n (\delta_k, x_k) > 0.
\end{align}

2. For any class-$k$ bandit, $(\delta_k, x_k) \in \mathcal{S}_{0}$ implies
\begin{align}
\alpha_n (\delta_k, x_k) = 0.
\end{align}

3. For any class-$k$ bandit, $(\delta_k, x_k) \in \mathcal{S}_{-}$ implies
\begin{align} \label{priority_3}
\alpha_n (\delta_k, x_k) < 0.
\end{align}

Because the policy $\pi_{\text{gain}}$ activates exactly $rM$ bandits with the highest gain and if a bandit $k$ is activated, $\pi_{\text{gain}}$ chooses $\mu=1$, we can deduce form \eqref{priority_1}-\eqref{priority_3} and Definition \ref{def_4} that $\pi_{\text{gain}}$ belongs to $\Pi_{\text{LP-Priority}}$. This concludes the proof.

}

\section{Special Case: Single-Source, Single-Channel} \label{single_source}

Let us consider a special case with $N = M = 1$, where the system has one source and one channel. Then, problem \eqref{problem_eq_1}-\eqref{constraint_eq_1} reduces to 
\begin{align}
\!\mathsf{L}_{1, \text{opt}}\! =\!\! & \inf_{\pi_1 \in \Pi_1}\! \limsup_{T \to \infty} \mathbb{E} \bigg[\frac{1}{T} \!\sum_{t=0}^{T-1} H_L (Y_{1, t}|\Delta_1 (t), X_{1, t-\Delta_1 (t)}) \!\bigg]. \label{problem_entropy}
\end{align}

In this sequel, we have the following theorem.

\begin{theorem} \label{theorem1}
For single-source, single-channel case, the optimal policy $\pi_1$ in \eqref{problem_entropy} chooses the active action at every time slot $t$.
\end{theorem}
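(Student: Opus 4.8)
Since $N=M=1$ there is neither resource contention nor a transmission cost in \eqref{problem_entropy}, so this is an average-cost MDP whose state is $(\delta,x)=(\Delta_1(t),X_{1,t-\Delta_1(t)})$, whose per-slot cost is the conditional entropy $q_1(\delta,x)=H_L(Y_{1,t}\mid\Delta_1(t)=\delta,X_{1,t-\delta}=x)$, and in which the action $\mu\in\{0,1\}$ influences only the state transition. I would write the Bellman optimality equation with relative value function $h$; comparing the two actions in \eqref{action_function} with the transmission cost removed ($\lambda=0$), the active action is optimal at $(\delta,x)$ precisely when
\begin{align}\label{star_prop}
\mathbb{E}\big[h(1,X^{\mathrm{new}})\mid X_{1,t-\delta}=x\big]\le h(\delta+1,x),
\end{align}
where $X^{\mathrm{new}}$ is the refreshed observation, distributed as the appropriate multi-step transition of the chain started from $x$. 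The plan is to take $h$ to be the relative value function of the always-active policy and verify that \eqref{star_prop} holds at every $(\delta,x)$; by the verification theorem for average-cost MDPs this makes the always-active policy optimal.

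To establish \eqref{star_prop} I would use a coupling. Run two copies of the always-active policy, one started in state $(1,X^{\mathrm{new}})$ and the other in $(\delta+1,x)$, driven by a common realization of the status process $\{X_{1,t}\}$ and of the delivery indicators $\{\gamma_1(t)\}$. Let $\tau$ be the first slot at which a transmission succeeds; since $p_1>0$, $\tau$ is geometric with finite mean. At slot $\tau$ both copies receive the same freshly generated packet, so from $\tau+1$ onward they occupy identical states and incur identical costs. Consequently the left- minus right-hand side of \eqref{star_prop} equals the expected accumulated cost gap before merging,
\begin{align}
\mathbb{E}\Big[\textstyle\sum_{s=0}^{\tau}\big(q_1(1+s,X^{\mathrm{new}})-q_1(\delta+1+s,x)\big)\Big],
\end{align}
which reduces \eqref{star_prop} to the per-slot comparison $\mathbb{E}[q_1(1+s,X^{\mathrm{new}})\mid X_{1,t-\delta}=x]\le q_1(\delta+1+s,x)$ for every $s\ge 0$.

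For the per-slot comparison I would exploit the Markov structure together with the concavity of the generalized entropy. Because $X^{\mathrm{new}}$ is a more recent sample than $x$, Assumption \ref{assumption1} gives the Markov chain $Y_{1,s}\leftrightarrow X^{\mathrm{new}}\leftrightarrow x$, and averaging $P_{Y_{1,s}\mid X^{\mathrm{new}}}$ against the transition law of $X^{\mathrm{new}}$ given $x$ returns exactly $P_{Y_{1,s}\mid X_{1,t-\delta}=x}$. Hence the left-hand side equals the extra-conditioned entropy $H_L(Y_{1,s}\mid X^{\mathrm{new}},X_{1,t-\delta}=x)$. Since $H_L(Y)=\min_{y}\mathbb{E}[L(Y,y)]$ is a minimum of linear functionals and therefore concave in the distribution of $Y$, Jensen's inequality---equivalently the data-processing inequality associated with \eqref{L_cond_en}--\eqref{L_cond_en_1}---yields $H_L(Y_{1,s}\mid X^{\mathrm{new}},X_{1,t-\delta}=x)\le H_L(Y_{1,s}\mid X_{1,t-\delta}=x)=q_1(\delta+1+s,x)$. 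Summing over $s$ and over $\tau$ gives \eqref{star_prop}, so the active action is optimal in every state.

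The main obstacle is exactly the feature flagged in the remark: for a fixed observation $x$ the penalty $q_1(\cdot,x)$ need not be monotone in age, so a naive coupling that merely compares ages, or any argument resting on age-monotonicity of the penalty, fails. The coupling above circumvents this by isolating the pre-merge interval, on which the relevant comparison is not between a small and a large age but between a \emph{fresher} and a \emph{staler observation of the same chain}; the gain then follows from informativeness via the data-processing inequality rather than from any monotonicity in $\delta$. A secondary technical point to verify is the regularity needed for the relative value function and the verification theorem, namely finiteness of $\mathbb{E}[\tau]$ and boundedness of $q_1$ over the truncated AoI range, which both hold because $p_1>0$ and $\mathcal{Y}_1$ is finite.
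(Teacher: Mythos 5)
Your proposal is correct, and it rests on exactly the same key fact as the paper's proof: the data-processing property of the generalized conditional entropy (the paper's Lemma \ref{lemma1}, $H_L(Y\mid Z=z)\ge H_L(Y\mid X,Z=z)$), applied to show that the expected cost-to-go from the refreshed state $(1,X^{\mathrm{new}})$ is dominated by the cost-to-go from the stale state $(\delta+1,x)$. Where you differ is in the machinery used to propagate that per-stage comparison through the dynamic program. The paper works with value iteration: it rearranges the Bellman recursion into the form \eqref{value_step_1} and asserts, invoking Lemma \ref{lemma1}, that $J_{1,k}(\delta+1,x)\ge\mathbb{E}[J_{1,k}(1,X_{1,0})\mid X_{1,-\delta}=x]$ for every iterate $k$, so the minimum in \eqref{value_step_1} is always attained by the active branch. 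You instead evaluate the always-active policy directly and compare the two relative value functions by coupling the two trajectories through a common status process and common delivery indicators, so that they merge at the first successful transmission and the value gap telescopes into a sum of per-slot entropy comparisons, each settled by the Markov chain $Y_{1,t+1+s}\leftrightarrow X_{1,t}\leftrightarrow X_{1,t-\delta}$ and the data-processing inequality. Your route is somewhat longer but arguably more self-contained: the paper's claim that the inequality holds ``for any $k$'' really requires an induction over $k$ that the paper does not spell out (the inductive step must combine the per-stage comparison with the inductive hypothesis on the continuation values), whereas your coupling makes the telescoping explicit and isolates exactly where freshness versus staleness of the observation—rather than any monotonicity of $q_1(\cdot,x)$ in the age—does the work. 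The only cosmetic point worth tightening is the indexing of the pre-merge sum ($s=0$ to $\tau$ versus $\tau-1$), which does not affect the argument since every summand is nonpositive in expectation.
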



Theorem \ref{theorem1} states that for single-source, single-channel case, it is always better to send. The proof of Theorem \ref{theorem1} is presented below:

Problem \eqref{problem_entropy} 
is an MDP that can be solved by Dynamic programming \cite{bertsekas2011dynamic}. The optimal policy associated with agent $1$ satisfies the following Bellman optimality equation:
\begin{align} \label{bellman}
J_1 (\delta, x) = H_L (Y_{1,\delta}|X_{1, 0}=x) - \mathsf{L}_{1, \text{opt}}  + \min\{J_1 (\delta+1, x), (1-p_1) J_1 (\delta+1, x) + p_1 \mathbb{E} [J_1(1, X_{1, 0})|X_{1, -\delta} =x]\},
\end{align}
where $J_1 (\delta, x)$ is the value function associated with state $(\delta, x)$ and $\mathsf{L}_{1, \text{opt}}$ is the optimal value of \eqref{problem_entropy}. 

As explained in \cite{bertsekas2011dynamic}, the optimal value function can be derived by using value iteration and the sequence of value functions $J_{1, k} (\delta, x)$ can be written as
\begin{align} \label{value_step}
J_{1, k+1} (\delta, x)
= H_L (Y_{1,\delta}|X_{n, 0}=x) - \mathsf{L}_{n, \text{opt}} + \min\{J_{1, k} (\delta+1, x), (1-p_1) J_{1,k} (\delta+1, x) + p_1 \mathbb{E} [J_{1, k} (1, X_{1, 0})|X_{1, -\delta} =x]\},
\end{align}
which converges to $\lim_{k \to \infty} J_{1,k} = J_1$ for any $J_{1, 0}$. After some rearrangements, we can write \eqref{value_step} as
\begin{align} \label{value_step_1}
J_{1, k+1} (\delta, x)
= H_L (Y_{1,\delta}|X_{1, 0}=x) - \mathsf{L}_{1, \text{opt}} +J_{1,k} (\delta+1, x) + p_1 \min\{0, - J_{1, k} (\delta+1, x) + \mathbb{E} [J_{1, k} (1, X_{1, 0})|X_{1, -\delta} =x]\}.
\end{align}

In this sequel, we introduce the following useful lemma which illustrates that more information reduces the $L$-conditional entropy.

\begin{lemma} \label{lemma1}
For random variables $X, Y,$ and $Z$, it holds that $H_L (Y|Z=z) \geq H_L (Y|X, Z=z)$, where
\begin{align}
H_L (Y|Z=z) &= \min_{a \in \mathcal{A}} \mathbb{E} [L(Y,a)|Z=z], \label{entropy_eq_1}\\
H_L (Y|X, Z\!=\!z) \!\!&= \!\!\sum_{x \in \mathcal{X}} P(X\!=\!x|Z\!=\!z) H_L (Y|X\!=\!x, Z\!=\!z). \label{entropy_eq_2}
\end{align}
\end{lemma}

\begin{proof}
See Appendix \ref{proof_lemma1}.
\end{proof}

Using Lemma \ref{lemma1}, we get that for any $k$, $J_{1, k} (\delta+1, x) \geq \mathbb{E} [J_{1,k} (1, X_{1,0})|X_{1, -\delta} = x]$ which implies that the penalty for not sending at iteration step $k$ is higher than sending. Therefore, taking the active action is beneficial to reduce the penalty. One interesting observation from \eqref{value_step_1} is that each time a packet is successfully delivered with probability $p_1$, a new piece of information about the sensor signal value is added with the existing information $(X_{1, t-\delta} =x)$ (see the term $\mathbb{E} [J_{1,k} (1, X_{1,0})|X_{1, t-\delta} =x]$ in \eqref{value_step}). This new information plays a crucial role in reducing the system penalty and hence benefits the system through sending. This completes the proof of Theorem \ref{theorem1}.

\section{Proof of Lemma \ref{lemma1}} \label{proof_lemma1}

From the definition of $L$-conditional entropy in \eqref{L_cond_en}, we get that
\begin{align}
& H_L (Y|Z=z) \nonumber\\
&= \min_{a \in \mathcal{A}} \mathbb{E} [L(Y,a) | Z=z] \label{def} \\ 
&= \min_{a \in \mathcal{A}} \sum_{y \in \mathcal{Y}} P(Y=y | Z=z) L(y,a) \nonumber\\
&= \min_{a \in \mathcal{A}} \!\sum_{y \in \mathcal{Y}} \!\sum_{x \in \mathcal{X}} \!P(Y\!=\!y |X\!=\!x, Z\!=\!z) P(X\!=\!x|Z\!=\!z) L (y,a) \nonumber\\
&= \min_{a \in \mathcal{A}}\! \sum_{x \in \mathcal{X}}\! P(X\!=\!x | Z\!=\!z) \sum_{y \in \mathcal{Y}} P(Y\!=\!y | X\!=\!x, Z\!=\!z) L(y,a) \nonumber\\
&\geq \!\sum_{x \in \mathcal{X}} \!P(X\!=\!x | Z\!=\!z) \min_{a \in \mathcal{A}} \!\sum_{y \in \mathcal{Y}} \!P(Y\!=\!y | X\!=\!x, Z\!=\!z) L(y,a) \label{proof_eq_2}
\end{align}
where \eqref{proof_eq_2} holds because $\min (f(w)+g(w)) \geq \min f(w) + \min g(w)$ for all $w$. Continuing from \eqref{proof_eq_2}, we get that
\begin{align}
H_L (Y|Z=z)
\geq  \sum_{x \in \mathcal{X}} P(X=x | Z=z) \min_{a \in \mathcal{A}} \mathbb{E} [L(Y,a) | X=x, Z=z]. \label{proof_eq_3}
\end{align}
Utilizing \eqref{entropy_eq_1}, we obtain that \cite{dawid1998coherent, farnia2016minimax, Shisher2022}
\begin{align}
\!\!\!H_L (Y|X=x, Z=z)
= \min_{a \in \mathcal{A}} \mathbb{E} [L(Y,a) | X=x, Z=z]. \label{entropy_eq_3}
\end{align}
Substituting \eqref{entropy_eq_3} into \eqref{proof_eq_3} yields
\begin{align}
H_L (Y|Z=z)\!\!\geq & \sum_{x \in \mathcal{X}} P(X=x|Z=z) H_L (Y|X=x, Z=z), \nonumber\\
=& H_L (Y|X, Z=z), \label{entropy_eq_4}
\end{align}
where \eqref{entropy_eq_4} follows from \eqref{entropy_eq_2}. This completes the proof.

\ignore{\begin{align}
& Z_{k} ^{\gamma} (\pi^* (\lambda^*), (\delta, x), 1) > 0, \label{eq_asymp_5} \\
& Z_{k} ^{\gamma} (\pi^* (\lambda^*), (\delta, x), 0) >0 \label{eq_asymp_6}
\end{align}
is given higher priority than a class-$j$ bandit in state $(
\delta', x')$ with
\begin{align}
& Z_{j} ^{\gamma} (\pi^* (\lambda^*), (\delta', x'), 0) > 0, \label{eq_asymp_3} \\
& Z_{j} ^{\gamma} (\pi^* (\lambda^*), (\delta', x'), 1) =0, \label{eq_asymp_4}
\end{align}
where \eqref{eq_asymp_5}-\eqref{eq_asymp_6} implies the $\alpha_{k, \lambda^*} (\delta, x) \geq 0$. On the other hand, \eqref{eq_asymp_3}-\eqref{eq_asymp_4} implies that $\alpha_{j, \lambda^*} (\delta', x') < 0$.

3. If
\begin{align}
\sum_{(\delta, x) \in \mathbb N \times \mathbb N} \sum_{k=1}^{N} Z_k ^{\gamma} (\pi^* (\lambda^*), (\delta, x), 1) < N,
\end{align}
then for any class-$k$ bandit at state $(\delta, x)$ with
\begin{align}
& Z_{k} ^{\gamma} (\pi^* (\lambda^*), (\delta, x), 1) = 0, \label{asymp_4}\\
& Z_{k} ^{\gamma} (\pi^* (\lambda^*), (\delta, x), 0) >0, \label{asymp_5}
\end{align}
the optimal policy always chooses the passive action $\mu=0$, where \eqref{asymp_4}-\eqref{asymp_5} implies the situation when $\alpha_{k, \lambda^*} (\delta, x) < 0$.
\end{definition}

Next, our goal is to show that the policy $\pi_{\text{opt}} (\lambda^*)$ satisfies these three sufficient conditions for asymptotic optimality. By utilizing the ``gain" in \eqref{gain_1}, we can express $Z_n ^{\gamma} (\pi^* (\lambda^*), (\delta, x), \mu)$ as follows
\begin{align}
& Z_n ^{\gamma} (\pi^* (\lambda^*), (\delta, x), 0) = 0, \text{if} {\thinspace} \alpha_{n, \lambda^*} (\delta, x) \geq 0, \nonumber\\
& Z_n ^{\gamma} (\pi^* (\lambda^*), (\delta, x), 0) > 0, \text{otherwise}. \label{asymp_1}
\end{align}
The state-action frequency under policy $\pi_{\text{opt}} (\lambda^*)$ for class-$n$ bandit with action $\mu=1$ satisfy
\begin{align}
& Z_n ^{\gamma} (\pi^* (\lambda^*), (\delta, x), 1) = 0, \text{if} {\thinspace} \alpha_{n, \lambda^*} (\delta, x) < 0, \nonumber\\
& Z_n ^{\gamma} (\pi^* (\lambda^*), (\delta, x), 1) > 0, \text{otherwise.} \label{asymp_2}
\end{align}
The proposed policy $\pi^* (\lambda^*)$ gives high priority to the class-$k$ bandit in state $(\delta, x)$ with $\alpha_{k, \lambda^*} (\delta, x) > \alpha_{k, \lambda^*} (\delta', x')$ than class-$j$ bandit in state $(\delta', x')$. Comparing this with \eqref{asymp_1}-\eqref{asymp_2}, we get that policy $\pi^* (\lambda^*)$ satisfies the sufficient conditions 1 and 2 in Definition \ref{def3}.
In addition, if 
\begin{align}
\sum_{(\delta, x) \in \mathbb N \times \mathbb N} \sum_{k=1}^{N} Z_k ^{\gamma} (\pi^* (\lambda^*), (\delta, x), 1) < N,
\end{align}
then for a class-$k$ bandit for which \eqref{asymp_4}-\eqref{asymp_5} holds, the ``gain" $\alpha_{k, \lambda^*} (\delta, x) < 0$. Hence, 
policy $\pi_{\text{opt}} (\lambda^*)$ 
always chooses the passive action $\mu =0$. By this, the sufficient condition 3 in Definition \ref{def3} is satisfied. This completes the proof.}

\bibliographystyle{IEEEtran}
\bibliography{ref,ref1,ref_2,sueh}


 




\vfill

\end{document}